\documentclass{article}

\usepackage{microtype}
\usepackage{graphicx}
\usepackage{subcaption}
\usepackage{enumitem}
\usepackage{url}
\usepackage{booktabs} 
\usepackage{wrapfig}
\usepackage{multirow}
\usepackage{caption}
\usepackage{mathtools}
\usepackage{amsmath,amsfonts, amsthm}
\usepackage{tikz}
\usepackage{threeparttable}
\usepackage{arydshln}
\usepackage{xspace}
\usepackage{tcolorbox}

\setlist{itemsep=1pt, topsep=0pt, parsep=1pt, partopsep=1pt, leftmargin=*}

\usepackage{hyperref}

\PassOptionsToPackage{dvipsnames}{xcolor,colortbl}
\RequirePackage{xcolor,colortbl}[dvipsnames] 
\definecolor{darkcerulean}{rgb}{0.03, 0.27, 0.49}
\definecolor{burgundy}{rgb}{0.7, 0.0, 0.13}
\definecolor{majorelleblue}{rgb}{0.38, 0.31, 0.86}
\definecolor{forestgreen}{rgb}{0.13, 0.55, 0.13}
\definecolor{arylideyellow}{rgb}{0.89, 0.61, 0.06}
\hypersetup{
    colorlinks=true, 
    linkcolor=burgundy,    
    urlcolor=majorelleblue     
}

\usepackage{array}
\usepackage{arydshln}

\usepackage[preprint]{icml2026}

\usepackage{amsmath}
\usepackage{amssymb}
\usepackage{mathtools}
\usepackage{amsthm}

\usepackage[capitalize,noabbrev]{cleveref}

\theoremstyle{plain}
\newtheorem{theorem}{Theorem}[section]
\newtheorem{proposition}[theorem]{Proposition}

\newtheorem{corollary}[theorem]{Corollary}
\theoremstyle{definition}

\theoremstyle{remark}
\newtheorem{remark}[theorem]{Remark}

\usepackage[textsize=tiny]{todonotes}

\usepackage{pifont}
\newcommand{\cmark}{\ding{51}}%
\newcommand{\xmark}{\ding{55}}%

\icmltitlerunning{}

\begin{document}

\twocolumn[
\icmltitle{RAG without Forgetting:\\Continual Query-Infused Key Memory}



  \icmlsetsymbol{equal}{*}

  \begin{icmlauthorlist}
    \icmlauthor{Yuntong Hu}{yyy,equal}
    \icmlauthor{Sha Li}{comp,equal}
    \icmlauthor{Naren Ramakrishnan}{comp}
    \icmlauthor{Liang Zhao}{yyy}
  \end{icmlauthorlist}

  \icmlaffiliation{yyy}{Emory University}
  \icmlaffiliation{comp}{Virginia Tech}

  \icmlcorrespondingauthor{Yuntong Hu}{yuntong.hu@emory.edu} 
  \icmlcorrespondingauthor{Liang Zhao}{liang.zhao@emory.edu}

  \icmlkeywords{Machine Learning, ICML}
  \vskip 0.3in
]




\printAffiliationsAndNotice{\textsuperscript{*}Equal Contribution.}  

\begin{abstract}
Retrieval-augmented generation (RAG) systems commonly improve robustness via query-time adaptations such as query expansion and iterative retrieval. While effective, these approaches are inherently stateless: adaptations are recomputed for each query and discarded thereafter, precluding cumulative learning and repeatedly incurring inference-time cost. Index-side approaches like key expansion introduce persistence but rely on offline preprocessing or heuristic updates that are weakly aligned with downstream task utility, leading to semantic drift and noise accumulation. We propose \textit{Evolving Retrieval Memory (ERM)}, a training-free framework that transforms transient query-time gains into persistent retrieval improvements. ERM updates the retrieval index through correctness-gated feedback, selectively attributes atomic expansion signals to the document keys they benefit, and progressively evolves keys via stable, norm-bounded updates. We show that query and key expansion are theoretically equivalent under standard similarity functions and prove convergence of ERM’s selective updates, amortizing optimal query expansion into a stable index with zero inference-time overhead. Experiments on BEIR and BRIGHT across 13 domains demonstrate consistent gains in retrieval and generation, particularly on reasoning-intensive tasks, at native retrieval speed. 
\end{abstract}

\section{Introduction}

Retrieval-Augmented Generation (RAG) has emerged as a leading paradigm for knowledge-intensive tasks by grounding generation in retrieved documents~\citep{lewis2020retrieval, guu2020retrieval}. Despite enhancing factual accuracy and domain adaptability, end-to-end RAG performance remains constrained by retrieval quality, which upper-bounds downstream generation. In practice, \textbf{query–corpus misalignment}, caused by semantic gaps between queries and indexed documents, ambiguous or underspecified query intent, and embedding anisotropy \citep{zhou2024ease} often leads to incomplete or imprecise retrieval, resulting in a dominant recall bottleneck.

\begin{figure}[t]
\centering
\includegraphics[width=0.45\textwidth]{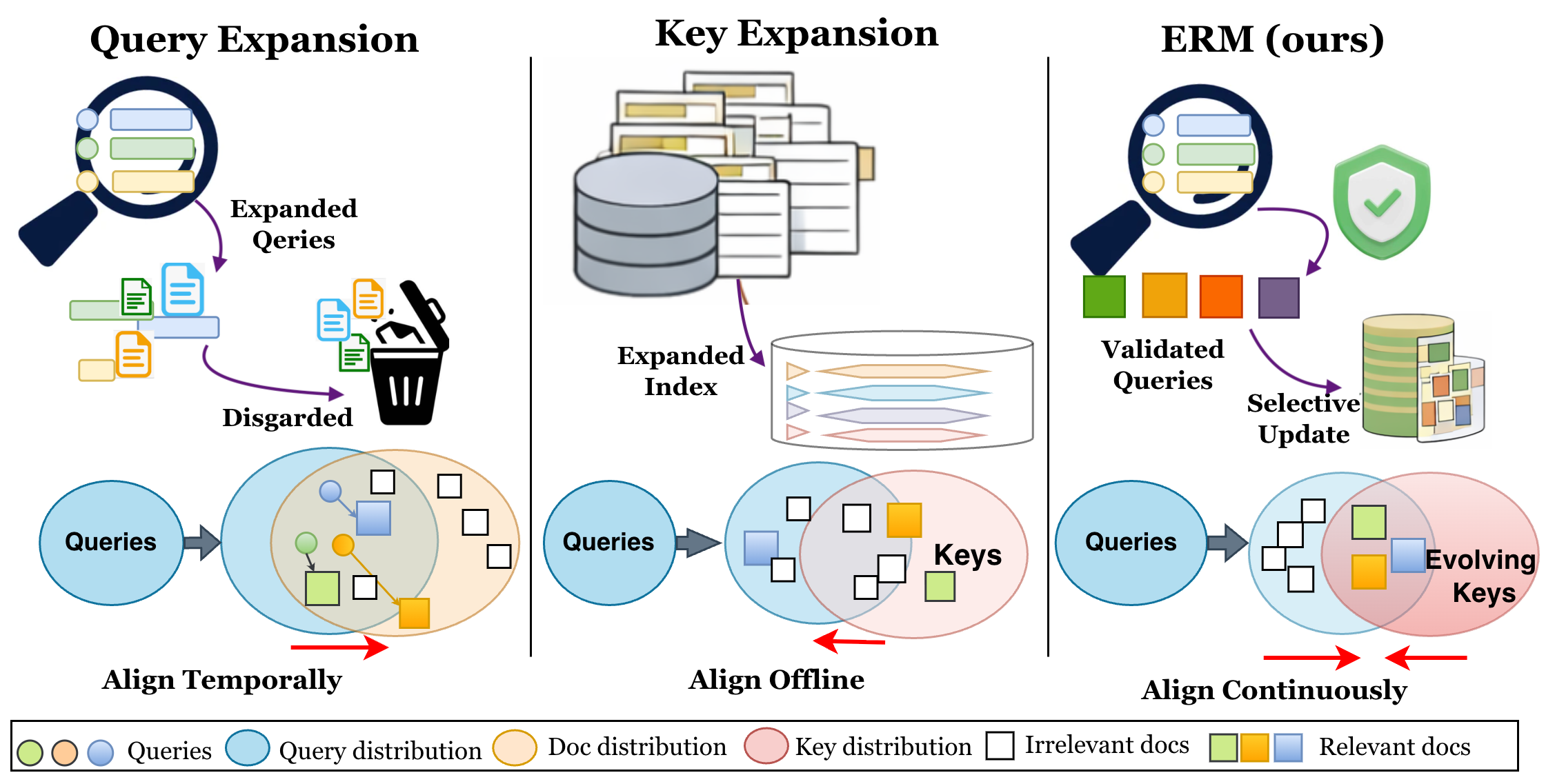}
\vspace{-4pt}
\caption{\textbf{Comparison of Query Expansion (QE), Key Expansion (KE), and Evolving Retrieval Memory (ERM).}  \textit{Left:} QE aligns queries to document space via inference-time expansions that are discarded after each query. \textit{Middle}: KE persistently aligns documents to queries through offline index enrichment but incurs high cost and drift. \textit{Right:} ERM converts validated query expansions into stable key updates, progressively aligns query and document distributions with no inference-time overhead.}
\label{fig:intro_fig}
\end{figure}

To mitigate, prior work has explored query expansion (QE) \citep{zhang-etal-2024-exploring-best,ma2023query, koo2024optimizing} by enriching query representations with paraphrases or auxiliary semantic signals to increase overlap with relevant documents. While empirically effective, QE adapts at inference time: expansions are recomputed for each query and discarded afterward, yielding only transient benefits. Recent advances of adaptive \citep{jeong-etal-2024-adaptive}, iterative \citep{jiang2025retrieve} and agentic RAG \citep{li2025towards} interleave retrieval with reasoning to enable multi-step search, reflection, and correction \citep{jeong-etal-2024-adaptive, jiang2025retrieve, li2025towards}. However, despite improved per-query robustness, these methods remain fundamentally stateless, as retrieval adaptations do not persist across queries, leading to repeated computation and limited cumulative learning. 

As a complementary axis, key expansion (KE) \citep{yang2025heterag, chen2024kg} enriches document representations in the retrieval index, but it is applied offline and query-agnostic, ignoring the online query distribution and downstream task utility, thus is prone to semantic drift and noise accumulation. Besides, the computational and storage overhead of precomputing enriched keys is significant, particularly for large-scale corpora, restricting scalability. Memory-augmented RAG addresses these limitations by introducing structured or associative memory modules \citep{jimenez2024hipporag, gutierrez2025rag} to store retrieved knowledge or intermediate reasoning states. This paradigm improves over traditional KE- and QE-RAG in terms of incorporating structured context. While promising, existing memory-augmented RAG systems often rely on heuristic or offline memory updates \citep{pan2025memory, qian2025memorag, zhou2025improving} that are weakly aligned with downstream task utility, treat retrieval and memory maintenance as disjoint processes \cite{zhou2025improving}, and lack principled mechanisms for memory selection, consolidation, and forgetting \cite{fu2025cue}. These limitations hinder scalability, long-term robustness, and effective cumulative learning.

The above limitations of prior approaches stem not from isolated design choices, but from a structural separation between where adaptation occurs and how it is validated. Query-centric (query-expansion) methods adapt in response to the current query and generation context, yet discard these adaptations once the query terminates, leading to repeated computation and no cumulative learning. Index-centric (key-expansion) and memory-augmented methods, meanwhile, introduce persistence but operate largely offline or with weak supervision, without direct alignment to task-level success. This disconnect prevents principled credit assignment: \textit{systems lack a reliable signal for deciding which retrieval behaviors are worth remembering and which should be forgotten.}
 
Consequently, effective long-term retrieval improvement requires jointly reasoning over query-time signals and index-time persistence. Retrieval memory must evolve selectively, guided by verifiable correctness signals, and incrementally, without destabilizing the retrieval space or incurring inference-time overhead. This observation motivates our proposal for \textit{Evolving Retrieval Memory (ERM)} (\hyperref[fig:intro_fig]{Figure} \ref{fig:intro_fig}), which explicitly bridges transient query expansions and persistent key representations through correctness-gated updates, selective attribution, and stable accumulation. By tightly coupling online adaptation with validated memory evolution, ERM transforms per-query gains into amortized improvements, overcoming the statelessness of query-centric methods and the drift-prone nature of offline memory construction.

Our contributions can be summarized as follows:
\begin{itemize}
    \item We propose Evolving Retrieval Memory (ERM), a training-free RAG framework that converts transient query-time adaptations into persistent retrieval improvements by correctness-gated feedback, selectively attributing atomic expansion signals to the keys they benefit, and progressively evolving the retrieval index through stable, norm-bounded updates without model retraining.
    \item We theoretically prove that query and key expansions are equivalent under standard similarity functions, and show that our bounded, selective key updates provably converge, amortizing the benefits of optimal query expansion into a stable index with zero inference-time cost.
    \item We evaluate ERM on BEIR and BRIGHT across 13 domains, with multiple retrievers, and indexing strategies for retrieval and RAG. ERM consistently improves retrieval and generation performance, with pronounced gains on reasoning-intensive tasks. By consolidating successful retrieval patterns and enabling query-conditional key specialization, ERM scales predictably with experience while operating at native retrieval speed, substantially outperforming LLM-based query expansion in efficiency.
\end{itemize}

\section{Related Work}

\subsection{Memory-Augmented Retrieval}

Memory-augmented retrieval introduces explicit memory architectures \cite{packer2023memgpt, zhong2024memorybank, chhikara2025mem0, fang2025lightmem, xu2025mem} that persist, organize, and adaptively reuse information from prior interactions or reasoning steps, improving retrieval quality and efficiency \cite{du2025memr, trivedi-etal-2023-interleaving, wang2025comorag}. Unlike static corpus retrieval, these methods maintain evolving memory states that support long-horizon reasoning. Existing approaches differ primarily in memory source and granularity. Segment-level or working memory stores compressed intermediate contexts or evidence \cite{pan2025memory, yan2025memory}, query-conditioned memory selectively activates relevant past knowledge via cues or signals \cite{fu2025cue, liao2025ecphoryrag}, and global or parametric memory constructs long-term representations distilled from accumulated experience \cite{qian2025memorag}. To capture structural dependencies, several works adopt graph-structured memory, enabling relational retrieval and multi-hop reasoning over entities and evidence \cite{liu-etal-2024-era, li2025knowtrace, jimenez2024hipporag, gutierrez2025rag, zhou2025improving}.

\subsection{Retrieval-Augmented Generation (RAG)} 

RAG enhances LLM performance for knowledge-intensive tasks 
\cite{izacard2023atlas, lewis2020retrieval, gao2023retrieval, tang2024multihoprag, xiong-etal-2024-benchmarking, fan2024survey} by incorporating external knowledge. Early effort integrates retrieval into next-token prediction \cite{Khandelwal2019GeneralizationTM, liu2024chatqa, izacard2020distilling, wang2023shall}, or end-to-end pipeline \cite{borgeaud2022improving, izacard2023atlas, guu2020retrieval} to jointly optimize the retriever and generator. Research incorporates modules like query refinement \cite{chan2024rq, gao-etal-2023-precise, ma-etal-2023-query}, 
post-retrieval processing \cite{abdallah2025rankify, xu2023recomp} to improve the retrieval quality, and enhance the efficient use of retrieved contents \cite{jiang-etal-2023-active, mao2023large, mao-etal-2024-rag, qian-etal-2024-grounding, yan2024corrective}. Structure-augmented RAG involves tree \cite{fatehkia2024t, li2025cft, hu-etal-2025-mcts}, graph \cite{sarthi2024raptor, edge2024local, guo2024lightrag} and hypergraph \cite{feng2025hyper, luo2025hypergraphrag, wang-han-2025-proprag} to capture relational and multi-hop context and enhance retrieval relevance. RAG synergize with reasoning capabilities for complex tasks \cite{jeong-etal-2024-adaptive, lee-etal-2024-planrag, asai2024self, shao-etal-2023-enhancing}. Recent advance in DeepResearcher \cite{zheng-etal-2025-deepresearcher, li2025webthinker, huang2025deep} extends RAG by embedding retrieval into a broader, agentic research workflow that combines LLM reasoning with dynamic tool use to solve open-ended, complex tasks.

\subsection{Query Expansion and Rewriting}
Query expansion and rewriting methods improve retrieval by transforming the original query into a more effective search representation. Classical approaches use pseudo-relevance feedback~\cite{amati2002probabilistic} or lexical expansion via synonyms and related terms. Neural methods rewrite queries using sequence-to-sequence models~\cite{ma-etal-2023-query} or generate hypothetical documents that serve as enriched query representations, as in HyDE~\cite{gao-etal-2023-precise}. Diversified expansion strategies such as Diver produce multiple reformulations to cover different facets of the information need. BGE-Reasoner-Rewriter~\cite{lan2025retro} combines retrieval-oriented reasoning with query rewriting in a single model. While these methods improve retrieval at inference time, their gains are transient: each query is expanded independently without accumulating knowledge from prior successful retrievals. ERM differs fundamentally by converting these transient query-time signals into persistent key-side improvements, enabling the retrieval index itself to evolve.

\section{Problem Formulation}

\paragraph{Retrieval Systems.}
Let $D = \{d_1, \dots, d_n\}$ be a corpus of $n$ documents and let $q \in \mathcal{Q}$ denote a user query.
Each document $d_i$ is indexed by a key $k_i \in \mathcal{K}$, yielding the key set $K=\{k_i\}_{i=1}^n$, where $\mathcal{K}$ denotes the retriever’s representation space.
A retrieval system is defined by a similarity function such as an inner-product:
\begin{equation}
    S : \mathcal{Q} \times \mathcal{K} \to \mathbb{R},
\end{equation}
which scores every key against the query.
We denote by $R_K(q) \subset [n]$ the index set of the $N$ documents whose keys achieve the highest similarity to $q$.

In practice, document keys are constructed using various key-building strategies, such as titles, keywords, abstracts, full document content, or document chunking, with the goal of filtering low-information terms and improving retrieval effectiveness. Key construction is typically performed globally during an offline corpus indexing stage and remains independent of individual online queries.


\paragraph{Retrieval-Augmented Generation (RAG).}
A RAG system augments an LLM-based generator $\pi$ with a retriever.
Given a query $q$, the retriever returns the top-$N$ relevant documents $R_K(q)$, whose values are passed as additional context to the generator, yielding
\begin{equation}
    Y_K(q) \sim \pi\big(\cdot \mid q, \{d_i\}_{i \in R_K(q)}\big),
\end{equation}
where $Y_K(q) \in \mathcal{Y}$ denotes the generated response conditioned on the query and the document keys $K$.

\paragraph{Query Expansion.}
Query Expansion (QE) augments the original query with additional related terms or reformulations to improve retrieval coverage by a query expansion method $\phi$. 
Formally, given a query $q \in \mathcal{Q}$, a query expansion method $\phi$ produces a set of expansion units
\begin{equation}
    c(q) = \{e_1,\dots,e_m\} \sim \phi(\cdot \mid q),
\end{equation}
yielding an expanded query representation $q' := \{q\} \cup c(q)$.
Retrieval is then performed using the expanded query.
In modern retrieval and RAG systems.
The expanded query is discarded after inference and induces no persistent modification to the retrieval memory.

Although both query expansion and corpus key building aim to improve
retrieval performance, they exhibit fundamental limitations:
\begin{itemize}
    \item Query expansion provides only transient gains and requires repeated augmentation at inference time;
    \item Key building is performed offline and globally, remaining fixed thereafter and ignoring the actual online query distribution;
    \item Corpus keys, query distributions, and downstream task objectives are decoupled, preventing retrieval system from being guided by task-level feedback.
\end{itemize}
These limitations motivate a retrieval mechanism that is online, query-driven, efficient, and selectively updated only when validated by downstream task performance.

\begin{figure*}[t]
  \begin{center}
    \centerline{\includegraphics[width=2.1\columnwidth]{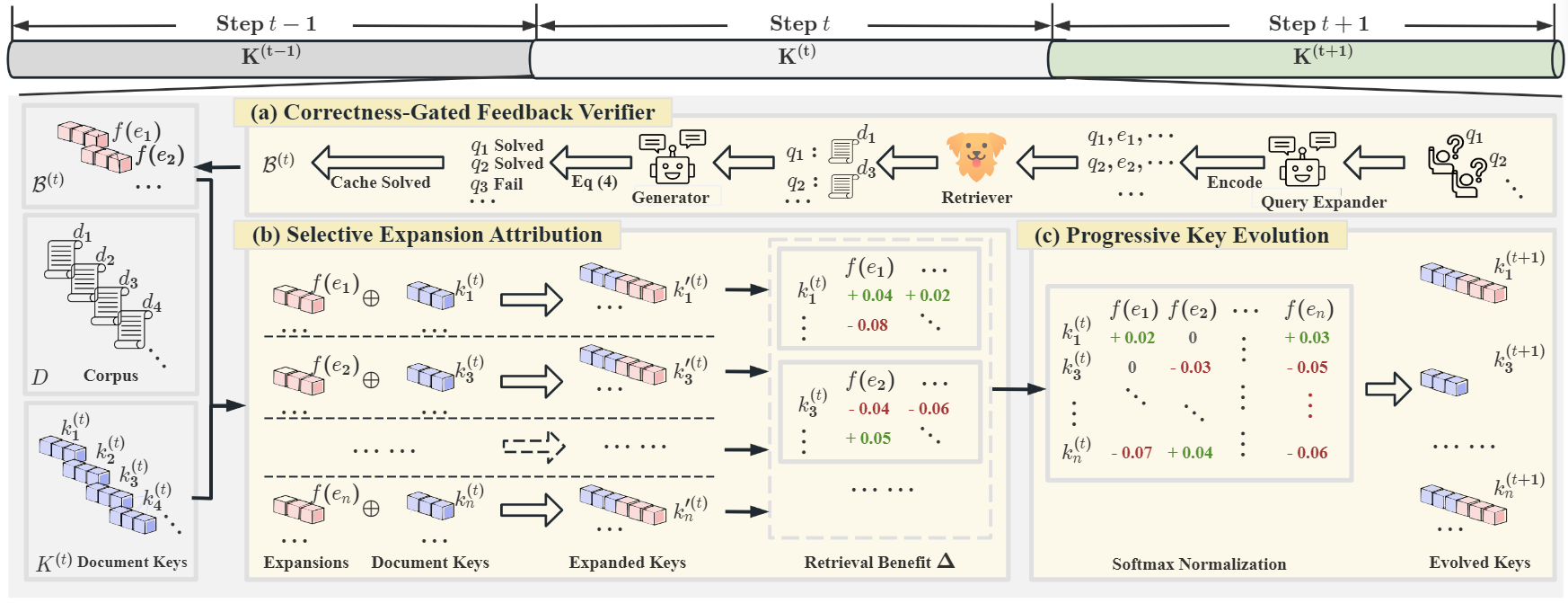}}
    \caption{\textbf{Illustration of ERM}.
(a) Correctness-gated verification filters task-validated query expansion units.
(b) Selective attribution assigns expansion benefits to retrieved documents.
(c) Softmax-normalized accumulation updates document keys.}
    \label{fig:flow}
  \end{center}
\vspace{-5pt}
\end{figure*}

\section{Evolving Retrieval Memory}
Modern retrieval systems can refresh corpus keys via periodic re-embedding or re-indexing, but such updates are typically corpus-wide and incur cost that scales with the corpus size. 
Meanwhile, query expansion (QE) improves retrieval using query-specific signals, but the gains are transient because expansions are discarded after inference. 
A key observation is that real-world queries are highly long-tailed and Zipf-like \cite{beitzel2004hourly, downey2008understanding, zhang2020exploiting}, with repeated queries/intents concentrating traffic on a small fraction of the corpus.
This property implies that \textbf{\textit{frequent global key updates are unnecessary}} and that \textbf{\textit{applying query expansion uniformly at inference time is inefficient}}.
Together, these observations motivate a lazy update strategy: keys should be updated only where repeated, task-solved queries provide reliable evidence. Such selective updates yield amortized query-scale cost that is independent of the corpus size in expectation.
Accordingly, we propose \textit{\textbf{Evolving Retrieval Memory (ERM)}}, a training-free retrieval enhancement framework that \emph{selectively evolves}
document keys using task-validated query expansion signals.
Specifically,
\begin{itemize}[left=0pt]
    \item \hyperref[method:verifier]{Section~\ref{method:verifier}} introduces the \textit{Correctness-Gated Feedback Verifier}, which leverages feedback from RAG systems to decide when memory evolution should be triggered (\hyperref[fig:flow]{Figure}\ref{fig:flow}~\hyperref[fig:flow]{(a)}). This mechanism generalizes across retrieval methods and RAG scenarios.
    \item \hyperref[method:bridge]{Section~\ref{method:bridge}} presents \textit{Selective Expansion Attribution} to decompose query expansions into atomic semantic attributions and assigns each attribution only to keys whose relevance is improved, bridging query expansion and key building (\hyperref[fig:flow]{Figure}\ref{fig:flow}~\hyperref[fig:flow]{(b)}).
    \item \hyperref[method:evo]{Section~\ref{method:evo}} introduces \textit{Progressive Key Evolution} as shown in (\hyperref[fig:flow]{Figure}\ref{fig:flow}~\hyperref[fig:flow]{(c)}), an efficient accumulation and update mechanism that enables keys to evolve in a stable, batched manner without retraining any model parameters.
\end{itemize}


\subsection{Correctness-Gated Feedback Verifier}
\label{method:verifier}

A central question in ERM is: \emph{when should query expansions be cached and used to update keys?}  
retrieval related tasks such as RAG are evaluated either by downstream task metrics, or both retrieval and downstream task metrics.  
In practice, different datasets present different supervision regimes:
(1) full retrieval and downstream task ground truth,  
(2) downstream task-only supervision, or  
(3) no explicit ground truth (human-as-judge or LLM-as-judge regimes).  


\vspace{1mm}
\paragraph{Unified correctness signal.}  
Let $V_r$ denote a retrieval verifier (e.g., recall@K or DPR match), and let $V_g$ denote a generation verifier (e.g., ROUGE, task loss, or LLM-as-judge scoring). Both $V_r$ and $V_g$ output binary correctness indicators. Each underlying metric is mapped to $\{0,1\}$ by a task-specific threshold or decision rule. 

Given an expanded query representation $q' := \{q\} \cup c(q)$, where $c(q)=\{e_1,\dots,e_m\}$ denotes the set of expansion units produced by a query expansion method, we define a unified success indicator:
\begin{equation}
    \mathrm{success}(q',K)
    ~:=~
    \mathbf{1}\!\left[
        V_r\!\big(R_K(q')\big)
        +
        V_g\!\big(Y_K(q')\big)
        \ge 1
    \right].
\end{equation}
The expanded query is accepted if either retrieval correctness or generation correctness holds, preventing ERM from overfitting to a single subsystem.
Only expansion units validated by this correctness signal are cached for
memory evolution, thereby avoiding harmful or noisy updates.
We next introduce how these validated units are attributed to individual
document keys.

\subsection{Selective Expansion Attribution}
\label{method:bridge}

Even when an expansion is globally useful, not all expansion components should update all retrieved documents. ERM addresses this through a similarity–gain attribution mechanism that routes each expansion unit only to the keys for which it provides positive retrieval benefit.

For each retrieved document $d_i$ by $R_K(q')$ and each expansion unit $e_j \in c(q)$, we measure the marginal benefit of assigning $e_j$ to the document key with respect to the original query $q$:
\begin{equation}
    \Delta_{i,j}(q)
    ~=~
    \mathrm{sim}\!\left(f(q),\, k_i \oplus f(e_j)\right)
    ~-~
    \mathrm{sim}\!\left(f(q),\, k_i\right),
\end{equation}
where $\oplus$ denotes a retriever-compatible key augmentation operator (additive composition: $k_i \oplus v = k_i + v$) that appends the expansion representation $f(e_j)$ to the key of document $d_i$, $f(\cdot)$ maps queries and expansion units into the retriever's representation space, and $\mathrm{sim}(\cdot,\cdot)$ is the retrieval similarity function.
This formulation is grounded in the following equivalence between query-side and key-side expansion:

\begin{proposition}[Query-Key Equivalence Under Semantic Composition]
\label{prop:qk-equivalence}
Assume the retriever similarity $\mathrm{sim}(\cdot,\cdot)$ is bilinear or monotone under additive embeddings.
Then for any expansion unit $e_j$,
\[
    \mathrm{sim}(f(q + e_j), k_i)
    ~\propto~
    \mathrm{sim}\big(f(q),\, k_i \oplus f(e_j)\big).
\]
Thus expanding queries is equivalent to expanding keys with respect to retrieval ranking under standard similarity operators.
\end{proposition}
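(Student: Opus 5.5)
The argument rests on additivity of the embedding, $f(q+e_j)=f(q)+f(e_j)$, which is what ``semantic composition'' means here. Under this assumption both sides of the claimed relation split into a common base score plus a cross term. I would make this precise first and then argue the cross terms agree up to a positive factor, or up to an offset that is the same for every document, so that rankings coincide.

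\textbf{Bilinear case.} Write $\mathrm{sim}(u,v)=u^\top W v$; the inner product is $W=I$. Using the additive operator $\oplus$ from the definition of $\Delta_{i,j}(q)$, expand both sides:
\[
\mathrm{sim}(f(q)+f(e_j),k_i)=f(q)^\top W k_i + f(e_j)^\top W k_i,
\]
\[
\mathrm{sim}(f(q),k_i+f(e_j))=f(q)^\top W k_i + f(q)^\top W f(e_j).
\]
Both share the base score $f(q)^\top W k_i$. The difference lies in the cross term.

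\begin{itemize}
\item On the query side, the cross term $f(e_j)^\top W k_i$ depends on $i$.
\item On the key side, the cross term $f(q)^\top W f(e_j)$ depends on $i$ only through which key receives the expansion.
\end{itemize}

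So the right reading of the claim is the attribution setting where $e_j$ is attached to the keys it is routed to. The plan is then to compare the two gains, $\Delta^{Q}_{i,j}=f(e_j)^\top W k_i$ and $\Delta^{K}_{i,j}=f(q)^\top W f(e_j)$.

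I would show that, for the symmetric inner product ($W=I$), swapping the roles of $(q,k_i)$ and $e_j$ is a symmetric bilinear identity. In the expanded representation space, $\langle f(q)+f(e_j),k_i\rangle$ and $\langle f(q),k_i+f(e_j)\rangle$ differ only by $\langle f(e_j),k_i-f(q)\rangle$. This difference vanishes, or is the same constant across $i$, under the alignment premise behind attribution: relevant keys are near the query, $k_i\approx f(q)$ up to a component orthogonal to $f(e_j)$. Under that premise the two scores are equal up to an $i$-independent constant. Such a constant preserves the ranking, which is what $\propto$ should be taken to mean.

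\textbf{Monotone case.} Suppose $\mathrm{sim}(u,v)=g(\langle u,v\rangle)$ with $g$ strictly increasing, as with cosine similarity for normalized embeddings or exponentiated scores. Applying $g$ to both sides preserves the ordering established above, so the ranking equivalence transfers.

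The main obstacle is that literal proportionality is false in general: the cross terms $\langle f(e_j),k_i\rangle$ and $\langle f(q),f(e_j)\rangle$ are genuinely different. The step requiring care is fixing the interpretation. I would state explicitly that $\propto$ means equality up to a positive affine, document-independent transformation. I would then add the near-alignment or orthogonality condition as the precise hypothesis under which the two gains coincide, noting that in the exactly-aligned limit $k_i=f(q)$ the identity is exact by symmetry of the inner product.
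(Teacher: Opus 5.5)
\textbf{There is a genuine gap, though not in your diagnosis.} You are right that the statement cannot be proved as written, and the paper states it without any proof; its appendix proves only the consistency, stability, corollary and cost results. With $f(q+e_j)=f(q)+f(e_j)$ and the inner product, the two sides are $f(q)^\top k_i+f(e_j)^\top k_i$ and $f(q)^\top k_i+f(q)^\top f(e_j)$. These generally induce different rankings. For example, take $f(q)=(1,0)$, $f(e_j)=(0,1)$, $k_1=(1,0)$, $k_2=(0.9,1)$. Query expansion puts $k_2$ first ($1.9$ vs.\ $1$). Adding $f(e_j)$ to any subset of the keys changes no score, because $f(q)^\top f(e_j)=0$, so $k_1$ stays first. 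The only related fact the paper actually uses is the per-key gain identity $\mathrm{sim}(f(q),k_i+f(e_j))-\mathrm{sim}(f(q),k_i)=f(q)^\top f(e_j)$, in the proof of Corollary~\ref{cor:consistency}. The gap is in your repair.

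\textbf{Why the repair fails.} Your near-alignment hypothesis says $\langle f(e_j),k_i-f(q)\rangle$ is zero, or the same, for every $i$. That is equivalent to $\langle f(e_j),k_i\rangle$ being constant in $i$, i.e.\ to query expansion by $e_j$ leaving the ranking completely unchanged. Under uniform key expansion both sides then reproduce the unexpanded ranking. The resulting ``equivalence'' is therefore vacuous: it says nothing about the ranking changes that make query expansion worth amortizing. If you impose the hypothesis only on relevant keys, as ``relevant keys are near the query'' suggests, the offset is no longer document-independent and the ranking conclusion fails.

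\textbf{Two further problems.} First, your argument conflates two readings of the key-side operation. With selective routing to a set $R$, the key-side scores are $f(q)^\top k_i+\mathbf{1}[i\in R]\,f(q)^\top f(e_j)$. Only routed keys receive the bonus $f(q)^\top f(e_j)$, whereas under your hypothesis the query side adds the same constant to every key. So the rankings can differ even under your hypothesis; a per-key score identity does not yield a ranking identity. Second, the monotone case does not cover cosine similarity. On the key side the normalization $\|k_i+f(e_j)\|$ depends on $i$, so cosine is not a fixed increasing function of the raw inner products you compared.

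\textbf{What could actually be proved.} A provable version would either be restricted to the per-key gain identity above, or would replace $\oplus f(e_j)$ by a key-dependent augmentation $v_i$ with $f(q)^\top v_i=f(e_j)^\top k_i$. The latter is a different statement from the proposition.
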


While correctness is verified using the expanded query $q'$, attribution is computed relative to $q$ to capture the marginal contribution of each expansion unit.
We retain an expansion unit for a given key only if it yields a positive retrieval gain over the original query, i.e., $\Delta > 0$.

ERM operates through two complementary mechanisms: (1) \emph{Hot cache reinforcement}: For frequently queried documents, ERM strengthens the association between document keys and successful query patterns, ensuring that similar future queries reliably retrieve relevant documents regardless of query expansion variability; (2) \emph{Relative ranking improvement}: For long-tail documents that receive fewer key updates, their relative similarity to queries increases as other documents' keys become more specialized, indirectly improving retrieval coverage.

\subsection{Progressive Key Evolution}
\label{method:evo}

To reduce drift from noisy or query-specific expansions and to prevent the accumulation of irrelevant correlations, ERM adopts a strategy of batched, progressive evolution. For each document $d_i$, ERM maintains an expansion memory $u_i$ that stores expansion units accumulated from past task-solved queries.
Given a batch of queries $\mathcal{B}$, ERM computes query-local attribution weights for each query $q \in \mathcal{B}$ to normalize the relative contribution of its expansion units:
\begin{align}
    w_{i,j}(q)
    &= \mathrm{softmax}_{e_j \in c(q)}\!\big(\Delta_{i,j}(q)\big),
    \label{eq:w}
\end{align}
where the softmax is taken over the expansion units generated from the same query.
These normalized contributions are then aggregated across the batch to update the persistent relevance scores:
\begin{align}
    s_{i,j}
    &\leftarrow
    s_{i,j}
    +
    \sum_{q \in \mathcal{B}}
    w_{i,j}(q)\,\Delta_{i,j}(q),
    \label{eq:s}
\end{align}
where $s_{i,j}$ denotes the accumulated relevance score of expansion unit $e_j$ for document $d_i$.
Each updated pair $(e_j, s_{i,j})$ is inserted into the expansion memory $u_i$, discarding lower-scoring entries when the capacity constraint is exceeded.

We show that this scoring mechanism identifies the most beneficial expansion units:

\begin{proposition}[Cumulative Consistency of Attribution Scores]
\label{prop:consistency}
Fix a document $d_i$ and suppose its key $k_i$ remains unchanged while ERM
processes a sequence of queries $\{q_t\}_{t=1}^T$ drawn i.i.d.\ from
$\mathcal{Q}$.
For each expansion unit $e_j$, define the expected gated attribution gain
$\mu_{i,j}
:=
\mathbb{E}_{q \sim \mathcal{Q}}
\!\left[
  \mathbf{1}_{i,j}(q)\,
  w_{i,j}(q)\,
  \Delta_{i,j}(q)
\right]$.
Then $s_{i,j}^{(T)}/T
\;\xrightarrow{\;\mathrm{a.s.}\;}\;
\mu_{i,j}$ as $T \to \infty$.
Moreover, if the $x$-th and $(x{+}1)$-th largest values among
$\{\mu_{i,j}\}_j$ are separated by a nonzero gap,
the top-$x$ selection operator $\tau_x(u_i)$ converges almost surely to
the set of $x$ expansion units with the largest $\mu_{i,j}$.
\end{proposition}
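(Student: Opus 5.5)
The plan is to view the accumulated score as a partial sum of i.i.d.\ random variables and apply the strong law of large numbers, followed by a finite-set ranking argument. Fix $d_i$ and an expansion unit $e_j$. Taking batches of size one (a general batch only regroups the same sum), the update \eqref{eq:s} together with the correctness gate and the attribution filter gives, starting from $s_{i,j}^{(0)}=0$,
\[
s_{i,j}^{(T)} \;=\; \sum_{t=1}^{T} X_t, \qquad X_t := \mathbf{1}_{i,j}(q_t)\, w_{i,j}(q_t)\, \Delta_{i,j}(q_t).
\]
Here $\mathbf{1}_{i,j}(q)$ records three events: $q'$ is accepted by $\mathrm{success}(q',K)$, $d_i\in R_K(q')$, and $\Delta_{i,j}(q)>0$. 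Because $k_i$ is held fixed, $X_t=g(q_t)$ for a single measurable function $g$. This is the step where the hypothesis of an unchanged key is used: otherwise $g$ would change with $t$ and the terms would stop being identically distributed. Since the $q_t$ are i.i.d., so are the $X_t$.

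Next I check integrability. The softmax weights satisfy $0\le w_{i,j}\le 1$ and the indicator is bounded by $1$, so $|X_t|\le|\Delta_{i,j}(q_t)|$. For cosine similarity, $|\Delta|\le 2$ directly. For inner-product similarity, Proposition~\ref{prop:qk-equivalence} (bilinearity) gives $\Delta_{i,j}(q)=\mathrm{sim}(f(q),f(e_j))$, which is bounded once query embeddings are norm-bounded. The paper's norm-bounded updates make that assumption natural; if needed I would state $\mathbb{E}|\Delta_{i,j}(q)|<\infty$ explicitly. Kolmogorov's strong law then gives $s_{i,j}^{(T)}/T\to\mathbb{E}[X_1]=\mu_{i,j}$ almost surely.

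For the top-$x$ claim, let $J$ be the finite set of candidate units for $d_i$. This requires assuming that the expansion method produces units from a finite (or at least countable and effectively finite) vocabulary. A countable intersection of probability-one events has probability one, so almost surely $s_{i,j}^{(T)}/T\to\mu_{i,j}$ for every $j\in J$ simultaneously. Let $\delta>0$ be the gap between the $x$-th and $(x{+}1)$-th largest $\mu_{i,j}$. On this event there is a finite (random) time $T_0$ such that $|s_{i,j}^{(T)}/T-\mu_{i,j}|<\delta/2$ for all $j\in J$ and all $T\ge T_0$. From then on, every top-$x$ unit has a normalized score above every other unit. Dividing by $T>0$ does not change the ordering, so $\tau_x$ applied to the scores returns the true top-$x$ set for all $T\ge T_0$, which is eventual equality and hence almost sure convergence.

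I expect the main obstacle to be reconciling this idealized argument with the capacity-limited memory $u_i$. If entries are discarded when capacity is exceeded, a discarded unit's score may stop accumulating, and the partial-sum identity breaks. I would first handle this by assuming $s_{i,j}$ is tracked for all $j$, with capacity only limiting which entries are exposed through $\tau_x$. If that is not acceptable, I would assume the capacity is at least $|J|$, or that evicted units re-enter with their scores preserved. A secondary point to settle is the finiteness or countability of $J$, which the uniform-in-$j$ step requires.
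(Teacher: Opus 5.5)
This is the paper's own argument with the details filled in: i.i.d.\ gated increments, integrability from $w_{i,j}\in[0,1]$ and bounded similarity gains, the strong law of large numbers, and eventual correct ordering over a finite unit set under the gap, with your explicit $\delta/2$ step and your caveats on the finiteness of $J$ and the capacity-limited memory making precise what the paper leaves implicit. One justification needs fixing: your gate involves $\mathrm{success}(q',K)$ and $d_i\in R_K(q')$, which depend on every key in the index, and under inner-product similarity $\Delta_{i,j}$ and $w_{i,j}$ do not depend on $k_i$ at all, so holding only $k_i$ fixed does not make $X_t$ a time-invariant function of $q_t$; you need the whole index frozen (as the paper's proof assumes when it takes the keys to be fixed), together with independent per-query draws of the expansion and generator randomness.
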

\noindent\emph{Proof.}\; See \hyperref[sec:consistency]{Appendix~\ref{sec:consistency}}.

When evolving document keys, ERM augments each key using the highest-scoring expansion units retained in $u_i$:
\begin{equation}
k_i^{(t+1)}
~=~
k_i^{(t)} \oplus f\left(\tau_x(u_i)\right),
\end{equation}
where $\tau_x(u_i)$ returns $x$ expansion units with the largest accumulated relevance scores.
This design ensures that key evolution incorporates only strongly supported semantic evidence, while the expansion memory prevents uncontrolled growth.
We establish that the resulting key sequences converge to a stable index:

\begin{theorem}[Stability of Key Sequences]
\label{thm:stability}
Suppose expansion embeddings are norm-bounded, $\|f(e_j)\| \le M$ for all~$j$.
\emph{(i)} In the offline setting, only finitely many evolution rounds are executed and each key converges to $k_i^* = k_i^{(0)} \oplus f(\tau_x^*)$.
\emph{(ii)} In the online setting, if the per-round key displacement satisfies
$\sum_{r=0}^{\infty}\|k_i^{(r+1)} - k_i^{(r)}\| < \infty$,
then $\{k_i^{(r)}\}$ converges to a limit~$k_i^*$; ERM's saturation criterion ensures this condition holds by terminating each round once the marginal benefit diminishes.
\end{theorem}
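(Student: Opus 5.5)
The argument splits along the two parts of the statement. For part (i), I would show that the offline key sequence is eventually constant, which immediately gives convergence. In the offline setting ERM processes a finite query log $\{q_t\}_{t=1}^T$ in finitely many batches. So the expansion memory $u_i$ receives only finitely many insertions, and after the last batch the scores $s_{i,j}$ are frozen. Hence $\tau_x(u_i)$ stabilizes to a fixed set $\tau_x^*$; if there are score ties, I fix a deterministic tie-breaking rule so that $\tau_x$ is well defined. When the sample is large, Proposition~\ref{prop:consistency} identifies $\tau_x^*$ with the top-$x$ units by $\mu_{i,j}$, which justifies calling it the limiting selection.

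The subtle point in (i) is how to read the recursion $k_i^{(t+1)} = k_i^{(t)} \oplus f(\tau_x(u_i))$. If it is applied literally and repeatedly with a fixed set, the additive operator gives $k_i^{(0)} + t\,f(\tau_x^*)$, which diverges. I would therefore interpret evolution as recomputing from the base key, $k_i^{(t+1)} = k_i^{(0)} \oplus f(\tau_x(u_i^{(t)}))$. This matches the stated limit $k_i^* = k_i^{(0)} \oplus f(\tau_x^*)$. Under this reading, once $\tau_x$ is frozen, the key is constant from that round on. The norm bound $\|f(e_j)\|\le M$ then shows the limit is a finite vector with $\|k_i^*-k_i^{(0)}\|\le xM$, taking $f$ of a set to be the sum of its elements, or $\le M$ if it is an average.

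For part (ii), the argument is the standard Cauchy criterion in the finite-dimensional, hence complete, space $\mathcal{K}$. For $r<s$ the triangle inequality gives
\[
\|k_i^{(s)}-k_i^{(r)}\| \le \sum_{\ell=r}^{s-1}\|k_i^{(\ell+1)}-k_i^{(\ell)}\|.
\]
The right-hand side is a tail of a convergent series, so it tends to $0$. Thus $\{k_i^{(r)}\}$ is Cauchy and converges to some $k_i^*$.

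The main obstacle is the final clause: that the saturation criterion actually enforces summability. Each displacement is bounded by $xM$, which by itself gives no summability. I would formalize saturation in one of two ways.
\begin{itemize}
\item \emph{Termination.} Each round stops once the marginal benefit $\max_j \Delta_{i,j}$ falls below a threshold $\epsilon$, and the number of non-saturated rounds per key is finite. This holds because each accepted round raises $\mathrm{sim}(f(q),k_i)$ by at least $\epsilon$ while bilinear similarity stays bounded on the relevant bounded region. Finitely many nonzero terms, each at most $xM$, then sum to a finite value.
\item \emph{Step sizes.} Alternatively, interpret the updates as using step sizes $\eta_r$ with $\sum_r \eta_r<\infty$, for example diminishing weights as the scores accumulate. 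The displacement is then at most $\eta_r xM$, which is summable.
\end{itemize}
I would present the first version, since it does not require modifying the algorithm.
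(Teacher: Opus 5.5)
Your treatment of part (i) and of the conditional in part (ii) is essentially the paper's argument. Finitely many batches give finitely many evolution steps; the paper phrases this as a finite sum of displacements, each at most $xM$. A summable displacement series then makes $\{k_i^{(r)}\}$ Cauchy in a complete space. Your remark that the stated limit $k_i^{(0)}\oplus f(\tau_x^*)$ requires each evolution to be recomputed from the base key is a real clarification. The paper's closing justification (``the final top-$x$ selection determines the augmentation'') silently assumes this reading; under the literal additive recursion the limit would be $k_i^{(0)}+\sum_t f(\tau_x^{(t)})$.

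The gap is in your justification of the final clause. The paper does not derive summability from saturation at all; it simply takes the saturation rule to stop evolution after finitely many rounds. Your potential-function argument for this fails, for three reasons.
\begin{enumerate}
\item Under the inner-product similarity and additive $\oplus$ used throughout, $\Delta_{i,j}(q)=f(q)^\top f(e_j)$ does not depend on $k_i$. Evolving the key therefore cannot push the marginal benefit below $\epsilon$.
\item Successive rounds are triggered by different queries, so no single quantity $\mathrm{sim}(f(q),k_i)$ rises by $\epsilon$ per round. An update accepted for one query can lower similarity for another.
\item The appeal to bounded similarity is circular under the literal recursion, where boundedness of the keys is exactly what is in question. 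Under your recompute-from-base reading the keys are bounded, but a round swaps the selected set rather than adding $\epsilon$ of similarity.
\end{enumerate}
Your per-key threshold is also not the paper's criterion, which is a patience rule on the batch-level gain $\Delta^{\mathcal{B}}$.

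There are two honest ways to close the clause. One is to state finiteness of the number of evolution rounds as the content of the stopping rule; that gives finitely many nonzero terms, each at most $xM$. The other is to argue stabilization directly. Under the recompute reading with finitely many expansion units there are only finitely many possible keys, so it suffices that $\tau_x$ be eventually constant. Proposition~\ref{prop:consistency} delivers this under its gap condition, provided you can justify its fixed-key (i.i.d.\ increment) hypothesis while the keys are changing.
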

\noindent\emph{Proof.}\; See \hyperref[sec:stability]{Appendix~\ref{sec:stability}}.

\begin{table*}[t]
\centering
\caption{\textbf{Retrieval performance with and without ERM across BEIR and BRIGHT benchmarks.}
We report nDCG@1 for all datasets, with $\Delta$ rows showing percentage improvement from ERM. Results span sparse (BM25), open-source dense (BGE, GTE, MiniLM), and proprietary (Cohere, Voyage) retrievers. ERM consistently improves performance, with the largest gains on reasoning-intensive datasets (LeetCode, Pony, AoPS, TheoremQA).}
\label{tab:main:results}
\renewcommand{\arraystretch}{1.15}
\resizebox{\textwidth}{!}{%
\begin{tabular}{lccccccccccccc|cc}
\toprule
 & \multicolumn{2}{c}{\textbf{BEIR}}
 & \multicolumn{7}{c}{\textbf{StackExchange}}
 & \multicolumn{2}{c}{\textbf{Coding}}
 & \multicolumn{2}{c}{\textbf{Theorem-based}}
 & \textbf{Avg.} \\
\cmidrule(lr){2-3}\cmidrule(lr){4-10}\cmidrule(lr){11-12}\cmidrule(lr){13-14}
\textbf{Model}
 & NFC. & Sci.
 & Bio. & Earth. & Econ. & Psy. & Rob. & Stack. & Sus.
 & Leet. & Pony
 & AoPS & TheoT.
 &  \\
\midrule

\multicolumn{15}{c}{\textit{Sparse model}} \\
\midrule
BM25
 & 20.7 & 35.4 & 46.6 & 42.2 & 27.2 & 31.7 & 22.8 & 27.4 & 28.7
 & 13.4 & 36.6
 & 0.9 & 7.9
 & 26.3 \\
\cdashline{1-15}
BM25 (w/ ERM)
 & 25.5 & 38.4 & 68.6 & 47.7 & 38.8 & 49.3 & 24.1 & 37.9 & 35.6
 & 16.0 & 60.0
 & 20.7 & 37.8
 & 38.5 \\
\rowcolor{gray!10} $\Delta$
 & \textcolor{teal}{+23\%} & \textcolor{teal}{+8\%} & \textcolor{teal}{+47\%} & \textcolor{teal}{+13\%} & \textcolor{teal}{+43\%} & \textcolor{teal}{+56\%} & \textcolor{teal}{+6\%} & \textcolor{teal}{+38\%} & \textcolor{teal}{+24\%}
 & \textcolor{teal}{+19\%} & \textcolor{teal}{+64\%}
 & \textcolor{teal}{+2200\%} & \textcolor{teal}{+378\%}
 & \textcolor{teal}{+46\%} \\
\midrule

\multicolumn{15}{c}{\textit{Dense Model}} \\
\midrule
BGE-Large
 & 25.5 & 44.7 & 95.1 & 78.8 & 58.2 & 53.5 & 43.4 & 50.5 & 79.1
 & 17.9 & 43.1
 & 8.7 & 33.5
 & 48.6 \\
BGE-Base
 & 28.2 & 42.8 & 96.0 & 79.6 & 62.1 & 51.5 & 45.4 & 51.3 & 79.1
 & 15.8 & 57.8
 & 7.0 & 24.5
 & 49.3 \\
BGE-M3-Dense
 & 21.3 & 37.2 & 81.5 & 73.5 & 48.4 & 41.4 & 46.3 & 43.7 & 67.8
 & 15.8 & 68.3
 & 6.1 & 28.3
 & 44.6 \\
GTE-Base
 & 24.5 & 45.6 & 97.8 & 77.9 & 65.3 & 54.8 & 44.8 & 57.2 & 76.3
 & 17.2 & 57.2
 & 2.5 & 33.8
 & 49.9 \\
MiniLM
 & 24.6 & 44.6 & 93.1 & 79.5 & 49.3 & 46.4 & 41.3 & 51.3 & 72.5
 & 18.5 & 52.7
 & 7.0 & 33.6
 & 47.3 \\
\cdashline{1-15}
BGE-Large (w/ ERM)
 & 26.5 & 45.1 & 91.3 & 80.4 & 58.8 & 55.1 & 40.4 & 55.6 & 75.9
 & 25.8 & 87.5
 & 20.1 & 61.3
 & 55.7 \\
\rowcolor{gray!10} $\Delta$
 & \textcolor{teal}{+4\%} & \textcolor{teal}{+1\%} & \textcolor{red}{-4\%} & \textcolor{teal}{+2\%} & \textcolor{teal}{+1\%} & \textcolor{teal}{+3\%} & \textcolor{red}{-7\%} & \textcolor{teal}{+10\%} & \textcolor{red}{-4\%}
 & \textcolor{teal}{+44\%} & \textcolor{teal}{+103\%}
 & \textcolor{teal}{+131\%} & \textcolor{teal}{+83\%}
 & \textcolor{teal}{+15\%} \\
GTE-Base (w/ ERM)
 & 25.0 & 47.9 & 94.9 & 82.6 & 64.0 & 65.2 & 43.9 & 58.9 & 73.2
 & 24.1 & 73.2
 & 15.2 & 65.6
 & 56.4 \\
\rowcolor{gray!10} $\Delta$
 & \textcolor{teal}{+2\%} & \textcolor{teal}{+5\%} & \textcolor{red}{-3\%} & \textcolor{teal}{+6\%} & \textcolor{red}{-2\%} & \textcolor{teal}{+19\%} & \textcolor{red}{-2\%} & \textcolor{teal}{+3\%} & \textcolor{red}{-4\%}
 & \textcolor{teal}{+40\%} & \textcolor{teal}{+28\%}
 & \textcolor{teal}{+508\%} & \textcolor{teal}{+94\%}
 & \textcolor{teal}{+13\%} \\
\midrule

\multicolumn{15}{c}{\textit{Proprietary models}} \\
\midrule
Cohere
 & 33.3 & 45.5 & 97.1 & 79.6 & 56.6 & 63.0 & 43.2 & 56.4 & 66.7
 & 14.8 & 38.4
 & 8.1 & 28.9
 & 48.7 \\
Voyage
 & 35.1 & 48.0 & 98.2 & 84.0 & 64.0 & 66.3 & 42.9 & 60.2 & 72.9
 & 13.1 & 41.0
 & 8.6 & 30.7
 & 50.8 \\
\cdashline{1-15}
Cohere (w/ ERM)
 & 34.2 & 47.1 & 95.8 & 81.2 & 58.4 & 65.7 & 40.8 & 59.1 & 68.3
 & 18.2 & 76.5
 & 17.8 & 54.2
 & 55.2 \\
\rowcolor{gray!10} $\Delta$
 & \textcolor{teal}{+3\%} & \textcolor{teal}{+4\%} & \textcolor{red}{-1\%} & \textcolor{teal}{+2\%} & \textcolor{teal}{+3\%} & \textcolor{teal}{+4\%} & \textcolor{red}{-6\%} & \textcolor{teal}{+5\%} & \textcolor{teal}{+2\%}
 & \textcolor{teal}{+23\%} & \textcolor{teal}{+99\%}
 & \textcolor{teal}{+120\%} & \textcolor{teal}{+88\%}
 & \textcolor{teal}{+13\%} \\
Voyage (w/ ERM)
 & 36.0 & 49.8 & 97.1 & 85.3 & 65.8 & 68.9 & 41.5 & 63.0 & 74.1
 & 16.8 & 81.2
 & 18.5 & 53.6
 & 56.3 \\
\rowcolor{gray!10} $\Delta$
 & \textcolor{teal}{+3\%} & \textcolor{teal}{+4\%} & \textcolor{red}{-1\%} & \textcolor{teal}{+2\%} & \textcolor{teal}{+3\%} & \textcolor{teal}{+4\%} & \textcolor{red}{-3\%} & \textcolor{teal}{+5\%} & \textcolor{teal}{+2\%}
 & \textcolor{teal}{+28\%} & \textcolor{teal}{+98\%}
 & \textcolor{teal}{+115\%} & \textcolor{teal}{+75\%}
 & \textcolor{teal}{+11\%} \\
\bottomrule
\end{tabular}%
}
\end{table*}

\begin{corollary}[Expected Consistency Under Additive Similarity]
\label{cor:consistency}
Under inner-product similarity $\mathrm{sim}(a,b)=a^\top b$ and additive key augmentation $k_i \oplus v = k_i + v$, the converged top-$x$ selection $S^*$ satisfies
$S^* = \operatorname*{arg\,max}_{|S|=x} \sum_{j \in S} \mu_{i,j}$,
i.e., it maximises expected similarity gain for future queries.
\end{corollary}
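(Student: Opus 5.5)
The argument has two parts. First, Proposition~\ref{prop:consistency} pins down the limit set $S^*$. Second, under inner-product similarity and additive augmentation, the expected similarity gain of augmenting with any candidate set is exactly the linear functional being maximised.

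\textbf{Step 1 (identify $S^*$).} Invoke Proposition~\ref{prop:consistency}: $s^{(T)}_{i,j}/T \to \mu_{i,j}$ almost surely for every $j$. With a finite candidate pool and the assumed gap between the $x$-th and $(x{+}1)$-th largest $\mu_{i,j}$, the top-$x$ operator $\tau_x(u_i)$ is eventually constant and equals the set of $x$ indices with largest $\mu_{i,j}$. For a linear objective $S\mapsto\sum_{j\in S}\mu_{i,j}$ over sets of fixed cardinality $x$, a simple exchange argument shows that the maximiser is exactly the top-$x$ set, and it is unique under the gap. Swapping any $j\in S$ for a larger-$\mu$ index outside $S$ strictly increases the sum. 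This gives $S^*=\arg\max_{|S|=x}\sum_{j\in S}\mu_{i,j}$.

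\textbf{Step 2 (interpret as expected similarity gain).} Under $\mathrm{sim}(a,b)=a^\top b$ and $k_i\oplus v=k_i+v$, the marginal gain simplifies: $\Delta_{i,j}(q)=f(q)^\top(k_i+f(e_j))-f(q)^\top k_i=f(q)^\top f(e_j)$. This is independent of $k_i$, which is consistent with the fixed-key hypothesis of Proposition~\ref{prop:consistency}.

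Augmenting with $f(S):=\sum_{j\in S}f(e_j)$, which is the natural reading of $f(\tau_x(u_i))$ under additive composition, gives
\[
\mathrm{sim}\big(f(q),k_i+f(S)\big)-\mathrm{sim}\big(f(q),k_i\big)=\sum_{j\in S}\Delta_{i,j}(q)
\]
by linearity. Taking expectations of the gated, weighted version of this decomposition yields $\mathbb{E}_q[\sum_{j\in S}\mathbf 1_{i,j}(q)w_{i,j}(q)\Delta_{i,j}(q)]=\sum_{j\in S}\mu_{i,j}$. The maximiser of this expected gain over $|S|=x$ is therefore $S^*$ from Step~1.

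\textbf{Main obstacle.} The hard step is justifying that ``expected similarity gain for future queries'' refers to the gated, softmax-weighted quantity $\mu_{i,j}$ rather than the raw $\mathbb{E}_q[\Delta_{i,j}(q)]$. The weights $w_{i,j}(q)$ depend on the whole expansion set $c(q)$, and the gate depends on success, so $\mu_{i,j}$ is not simply $\mathbb{E}[f(q)]^\top f(e_j)$. I would handle this by stating explicitly that the future-query gain is measured under the same gated attribution measure that defines $\mu_{i,j}$, and linearity of expectation then closes the argument. I would also add a remark that the conclusion holds for raw expected gain in the special case where $\mathbf 1_{i,j}w_{i,j}$ is independent of $\Delta_{i,j}$ or constant.
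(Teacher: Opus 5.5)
Your proposal is correct and follows essentially the same route as the paper's proof. You invoke Proposition~\ref{prop:consistency} for the limit of $\tau_x(u_i)$, use bilinearity to write the gain of augmenting with $S$ as $\sum_{j\in S} f(q)^\top f(e_j)$, take the expectation weighted by the gates and softmax routing to obtain $\sum_{j\in S}\mu_{i,j}$, and observe that the top-$x$ set maximises this sum. Your exchange argument and your explicit stipulation of the weighted-expectation reading simply make precise what the paper leaves implicit.

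One caution on your optional closing remark: independence of $\mathbf{1}_{i,j}w_{i,j}$ from $\Delta_{i,j}$ is not enough to transfer the result to the raw gain $\mathbb{E}_q[\Delta_{i,j}(q)]$. Under independence $\mu_{i,j}=\mathbb{E}[\mathbf{1}_{i,j}w_{i,j}]\,\mathbb{E}[\Delta_{i,j}]$, and the $j$-dependent prefactor can reorder the units, so you would need that prefactor to be the same positive constant for all $j$. In any case $w_{i,j}$ is a function of the $\Delta$'s, so such independence is unlikely to hold.
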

\noindent\emph{Proof.}\; See \hyperref[sec:expected-consistency]{Appendix~\ref{sec:expected-consistency}}.

The branch of expanded queries may arise from offline preprocessing or online streaming. 
In the online setting, after each batch update the buffer $\mathcal{B}$ is cleared and processing continues. 
To determine whether the current branch remains informative, ERM evaluates its alignment gain:
\begin{equation}
    \Delta^{\mathcal{B}} = \max_{(q,\,c(q')) \in \mathcal{B}}\; \max_{i,j}\, \Delta_{i,j}(q).
\end{equation}
ERM maintains the best branch gain observed so far, denoted by $\max\big\{ \Delta^{\mathcal{B}_{1}}, \dots, \Delta^{\mathcal{B}_{t}} \big\}$.
Key evolution terminates via a patience-based saturation criterion: if $P$ consecutive branch extensions fail to exceed a relative margin $\rho$ of the best historical gain,
\begin{equation}
    \Delta^{\mathcal{B}_{t+1}} \le (1-\rho)\max\big\{ \Delta^{\mathcal{B}_1}, \dots, \Delta^{\mathcal{B}_t} \big\},
\end{equation}
the index is deemed saturated and a batch evolution step is triggered, preventing premature stopping from transient gain fluctuations.
After the key evolution phase completes, generation proceeds as
\begin{equation}
    Y_{K'}(q) = G\big(q, \{v_i\}_{i \in R_{K'}(q)}\big),
\end{equation}
where the query $q$ is executed directly against the evolved document keys $K'$ at inference time.
The updated keys provide high-recall retrieval without requiring further query expansion, yielding both improved generation quality and reduced inference latency.

Finally, we show that ERM's expansion cost is sublinear in the query volume under realistic query distributions:

\begin{proposition}[Amortized Expansion Cost]
\label{prop:cost}
Assume queries are drawn i.i.d.\ from a Zipf-like intent distribution with
exponent $\alpha > 1$.
Then ERM's cumulative expansion cost grows as $\Theta(T^{1/\alpha})$, sublinearly in the query volume $T$, and its per-query amortized cost vanishes as $T \to \infty$.
\end{proposition}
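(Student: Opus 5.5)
We model ERM's expansion cost as follows: an LLM expansion is invoked for a query only when its intent has not yet been consolidated into the evolved keys, i.e.\ essentially the first time (or a bounded number $c$ of times, until the correctness gate and saturation criterion fire) that intent appears. Once an intent's documents have been updated, later queries with that intent are served directly against $K'$ at native retrieval cost, which we treat as $O(1)$ and exclude from the expansion cost. The cumulative expansion cost after $T$ queries is then, up to the constant $c$ and the per-expansion cost, the number of distinct intents observed, $D_T := \#\{\text{intents seen among } q_1,\dots,q_T\}$. The proof therefore reduces to the classical occupancy problem for a Zipf law with $p_r = r^{-\alpha}/\zeta(\alpha)$, $\alpha>1$.

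The first step writes the expectation via linearity:
\[
\mathbb{E}[D_T]=\sum_{r\ge1}\bigl(1-(1-p_r)^T\bigr).
\]
We then split the sum at the threshold rank $r_0 := T^{1/\alpha}$, where $Tp_r \asymp 1$.

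\begin{itemize}
\item For $r\le r_0$, each term lies between $1-e^{-Tp_{r}}$ (which is bounded below by a constant) and $1$. This block therefore contributes $\Theta(r_0)=\Theta(T^{1/\alpha})$.
\item For $r>r_0$, we use $1-(1-p_r)^T\le Tp_r$. The tail contributes
\[
\frac{T}{\zeta(\alpha)}\sum_{r>r_0} r^{-\alpha}=O\!\bigl(T\,r_0^{1-\alpha}\bigr)=O\!\bigl(T^{1/\alpha}\bigr),
\]
by the integral comparison, which uses $\alpha>1$ so that the tail sum converges.
\end{itemize}

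Together these give $\mathbb{E}[D_T]=\Theta(T^{1/\alpha})$. Dividing by $T$ gives the amortized cost $\Theta(T^{1/\alpha-1})\to0$.

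For an almost-sure or high-probability statement, $D_T$ is a sum of indicator-type quantities with bounded differences: changing one query changes $D_T$ by at most $1$. McDiarmid's inequality then gives concentration with deviations $O(\sqrt{T\log T})$. This is too weak when $1/\alpha<1/2$, so we instead use the Poissonization argument, where the occupancy indicators become independent and $\operatorname{Var}(D_T)\le \mathbb{E}[D_T]$. Combined with Chebyshev's inequality and a subsequence/monotonicity argument, this upgrades the bound to $D_T/T^{1/\alpha}=\Theta(1)$ almost surely. Alternatively, we simply state the result in expectation.

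We expect the main obstacle to be the modelling step rather than the asymptotics: justifying that each intent triggers only $O(1)$ expansions. This requires that after the batch evolution for an intent, future queries of that intent no longer need expansion. The natural route is to appeal to Theorem~\ref{thm:stability}, under which keys stabilize after finitely many rounds, and to Proposition~\ref{prop:consistency}, under which the selected units stabilize. Together these bound the number of expansion-requiring visits per intent by a constant $c$ independent of $T$. We would state this as an explicit assumption of the cost model if it cannot be derived cleanly. If $c$ instead grew slowly, for example logarithmically in the intent's visit count, the same rank-splitting argument would still give sublinear cost, up to logarithmic factors.
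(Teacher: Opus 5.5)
Your proof is correct and follows the paper's route. It reduces the cost to the number $D_T$ of distinct intents, writes $\mathbb{E}[D_T]=\sum_r\bigl(1-(1-p_r)^T\bigr)$, splits at rank $\asymp T^{1/\alpha}$, bounds head terms below by a constant and tail terms by $Tp_r$, and divides by $T$. The paper simply asserts the ``at most one expansion per distinct intent'' step, so stating it as an explicit assumption is the right call. Theorem~\ref{thm:stability} and Proposition~\ref{prop:consistency} say nothing per intent and give no uniform $c$, and the paper does not attempt your almost-sure upgrade.

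One point in your favour: normalizing by $\zeta(\alpha)$ over infinitely many intents is the right setting. The paper's finite-support normalization $H_{m,\alpha}$ forces $D_T\le m$, so its $\Theta(T^{1/\alpha})$ claim can only hold while $T\lesssim m^{\alpha}$.
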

\noindent\emph{Proof.}\; See \hyperref[sec:zipf-cost]{Appendix~\ref{sec:zipf-cost}}.

\begin{table*}[t]
\centering
\caption{\textbf{End-to-end question-answering performance with ERM-augmented retrieval.}
We evaluate answer quality using Claude-3.5-sonnet for both generation and evaluation on StackExchange Q\&A domains. ERM improves generation quality across all retrievers, with BM25 showing the largest gain (+6\% average) and dense retrievers achieving +2--4\% improvements, demonstrating that improved retrieval translates to better downstream task performance.}
\label{tab:qa_retriever_results}
\small
\setlength{\tabcolsep}{6pt}
\renewcommand{\arraystretch}{1.15}
\begin{tabular}{l|ccccccc|c}
\toprule
\textbf{Retriever} 
 & \textbf{Bio.} 
 & \textbf{Earth.} 
 & \textbf{Econ.} 
 & \textbf{Psy.} 
 & \textbf{Rob.} 
 & \textbf{Stack.} 
 & \textbf{Sus.} 
 & \textbf{Average} \\
\midrule
BM25
 & 71.8 & 74.5 & 69.8 & 74.8 & 71.4 & 76.4 & 69.5 & 72.6 \\
\cdashline{1-9}
BM25 (w/ ERM)
 & 79.9 & 77.8 & 73.0 & 76.1 & 76.4 & 80.5 & 72.5 & 76.6 \\
\rowcolor{gray!10} $\Delta$
 & \textcolor{teal}{+11\%} & \textcolor{teal}{+4\%} & \textcolor{teal}{+5\%} & \textcolor{teal}{+2\%} & \textcolor{teal}{+7\%} & \textcolor{teal}{+5\%} & \textcolor{teal}{+4\%} & \textcolor{teal}{+6\%} \\
\midrule
BGE-Large
 & 80.2 & 79.3 & 67.6 & 69.1 & 74.8 & 78.5 & 72.2 & 74.5 \\
GTE-Base
 & 80.5 & 81.6 & 74.4 & 77.5 & 75.3 & 79.7 & 73.1 & 77.4 \\
Cohere
 & 81.1 & 82.9 & 76.6 & 78.7 & 79.9 & 82.1 & 73.7 & 79.3 \\
\cdashline{1-9}
BGE-Large (w/ ERM)
 & 81.2 & 80.6 & 73.7 & 76.9 & 76.4 & 81.3 & 73.1 & 77.6 \\
\rowcolor{gray!10} $\Delta$
 & \textcolor{teal}{+1\%} & \textcolor{teal}{+2\%} & \textcolor{teal}{+9\%} & \textcolor{teal}{+11\%} & \textcolor{teal}{+2\%} & \textcolor{teal}{+4\%} & \textcolor{teal}{+1\%} & \textcolor{teal}{+4\%} \\
GTE-Base (w/ ERM)
 & 82.2 & 80.9 & 75.1 & 79.9 & 76.9 & 84.6 & 73.6 & 79.0 \\
\rowcolor{gray!10} $\Delta$
 & \textcolor{teal}{+2\%} & \textcolor{red}{-1\%} & \textcolor{teal}{+1\%} & \textcolor{teal}{+3\%} & \textcolor{teal}{+2\%} & \textcolor{teal}{+6\%} & \textcolor{teal}{+1\%} & \textcolor{teal}{+2\%} \\
Cohere (w/ ERM)
 & 81.9 & 82.3 & 77.8 & 81.2 & 78.7 & 86.9 & 74.9 & 80.5 \\
\rowcolor{gray!10} $\Delta$
 & \textcolor{teal}{+1\%} & \textcolor{red}{-1\%} & \textcolor{teal}{+2\%} & \textcolor{teal}{+3\%} & \textcolor{red}{-2\%} & \textcolor{teal}{+6\%} & \textcolor{teal}{+2\%} & \textcolor{teal}{+2\%} \\
\bottomrule
\end{tabular}
\end{table*}

\begin{figure*}[t]
  \centering
  \begin{subfigure}[t]{0.33\textwidth}
    \centering
    \includegraphics[width=\linewidth]{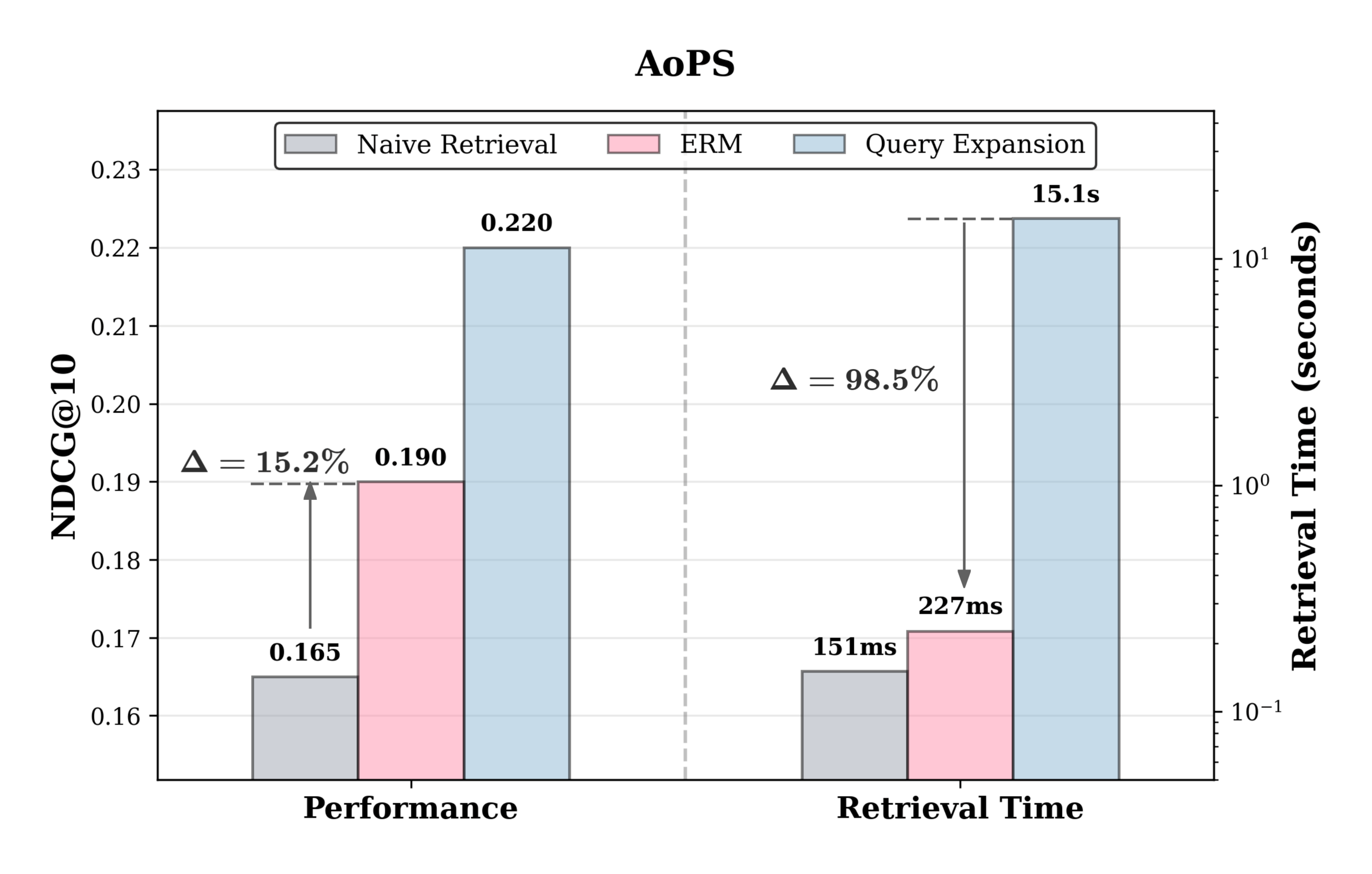}
    \caption{AoPS}
    \label{fig:compare:a}
  \end{subfigure}
  \hfill
  \begin{subfigure}[t]{0.33\textwidth}
    \centering
    \includegraphics[width=\linewidth]{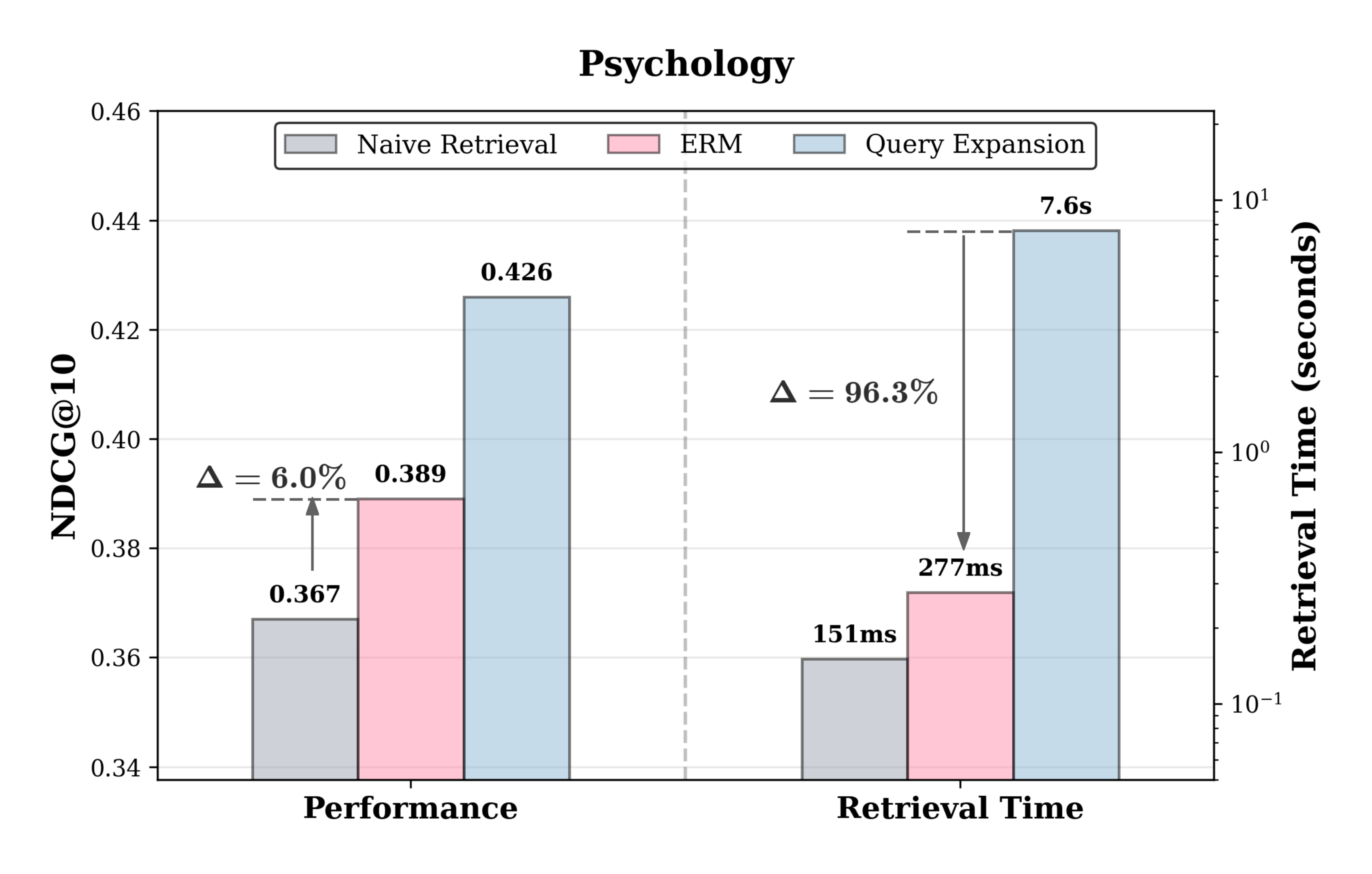}
    \caption{Psy}
    \label{fig:compare:b}
  \end{subfigure}
  \begin{subfigure}[t]{0.33\textwidth}
    \centering
    \includegraphics[width=\linewidth]{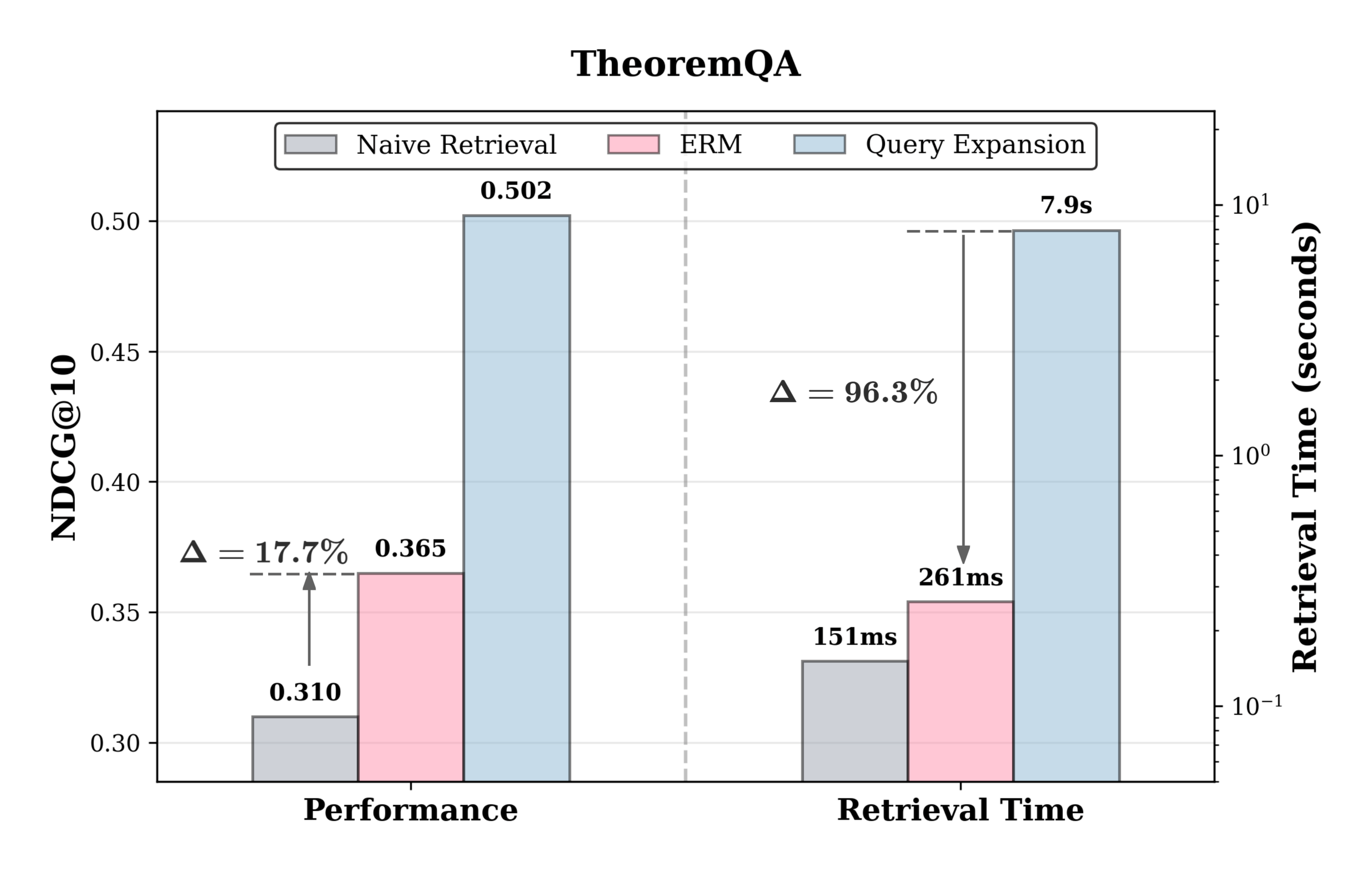}
    \caption{TheoT}
    \label{fig:compare:c}
  \end{subfigure}
  \hfill
    \caption{\textbf{Performance vs. latency trade-off.} Comparison of retrieval performance (nDCG@10, left bars) and inference time (log scale, right bars) for Naive Retrieval, ERM, and Query Expansion (HyDE). ERM achieves performance competitive with or exceeding HyDE while maintaining near-native retrieval latency (ms vs. seconds). Configuration: GTE-base retriever, 0.5 split rate, title indexing.}
  \label{fig:main:compare}
\end{figure*}

\section{Empirical Evaluation}

\paragraph{Dataset.} We evaluate our method on 13 datasets from the BRIGHT~\cite{su2024bright} and BEIR~\cite{thakur2021beir} benchmarks. Detailed dataset descriptions are provided in \hyperref[app:dataset]{Appendix}~\ref{app:dataset}.

\paragraph{Evaluation Metrics.}
We report normalized Discounted Cumulative Gain at rank 10 (nDCG@10) as the primary retrieval metric and Mean Reciprocal Rank (MRR) as a secondary metric, following BEIR and BRIGHT conventions.
For generation tasks, we use the official evaluation metrics provided by the BRIGHT benchmark.

\paragraph{Baselines.} 
We include representative query rewriting and expansion approaches, including BGE-Reasoner-Rewriter~\cite{lan2025retro}, clarification and specification, T5L-Turbo Query Rewriting~\cite{ma2023query}, contrastive-, facet-, and synonym-based expansion, HyDE query transformation~\citep{gao-etal-2023-precise}, and query decomposition. Experiments are conducted in a standard RAG pipeline with multiple index representations (title, keywords, abstract, and full text) and a diverse set of retrievers, including sparse methods such as BM25~\citep{robertson1995okapi} and BGE-M3~\citep{chen-etal-2024-m3}, dense retrievers such as BGE-Large/Base~\citep{xiao2024c}, GTE-Base~\citep{li2023towards}, and MiniLM, as well as proprietary embedding APIs (Cohere and Voyage). 
Details are provided in \hyperref[app:dataset]{Appendix}~\ref{app:dataset}.

\begin{figure*}[t]
  \centering
  \begin{subfigure}[t]{0.24\textwidth}
    \centering
    \includegraphics[width=\linewidth]{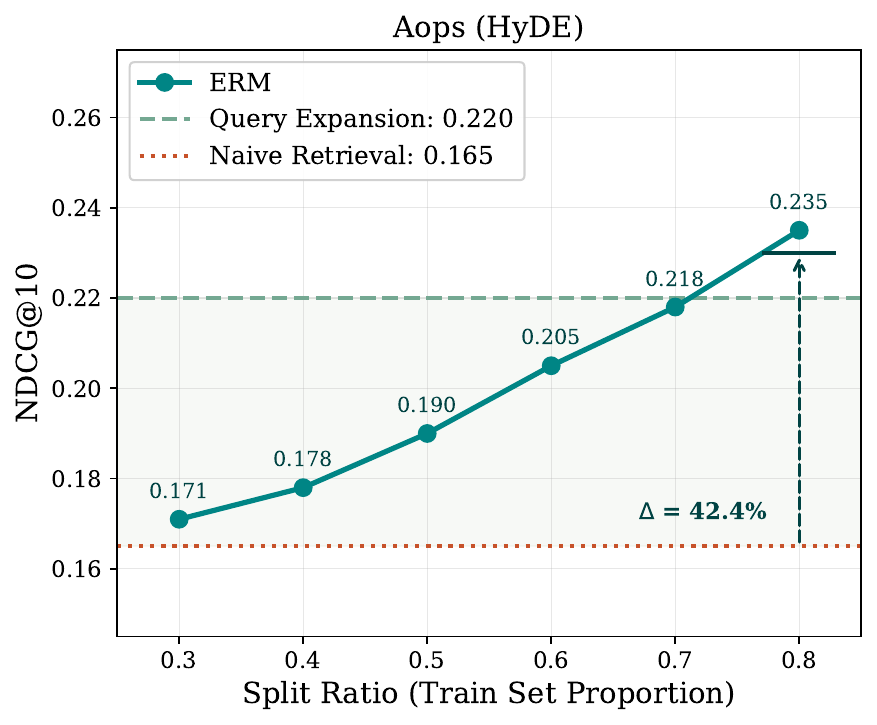}
    \caption{AoPS}
    \label{fig:dyrag:a}
  \end{subfigure}
  \hfill
  \begin{subfigure}[t]{0.24\textwidth}
    \centering
    \includegraphics[width=\linewidth]{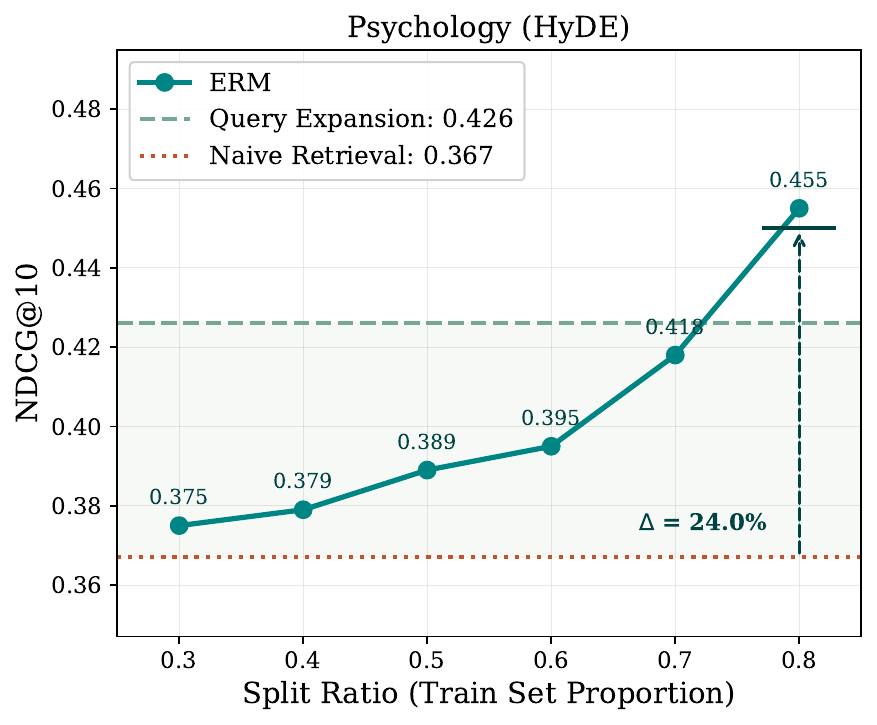}
    \caption{Psy}
    \label{fig:dyrag:b}
  \end{subfigure}
  \begin{subfigure}[t]{0.24\textwidth}
    \centering
    \includegraphics[width=\linewidth]{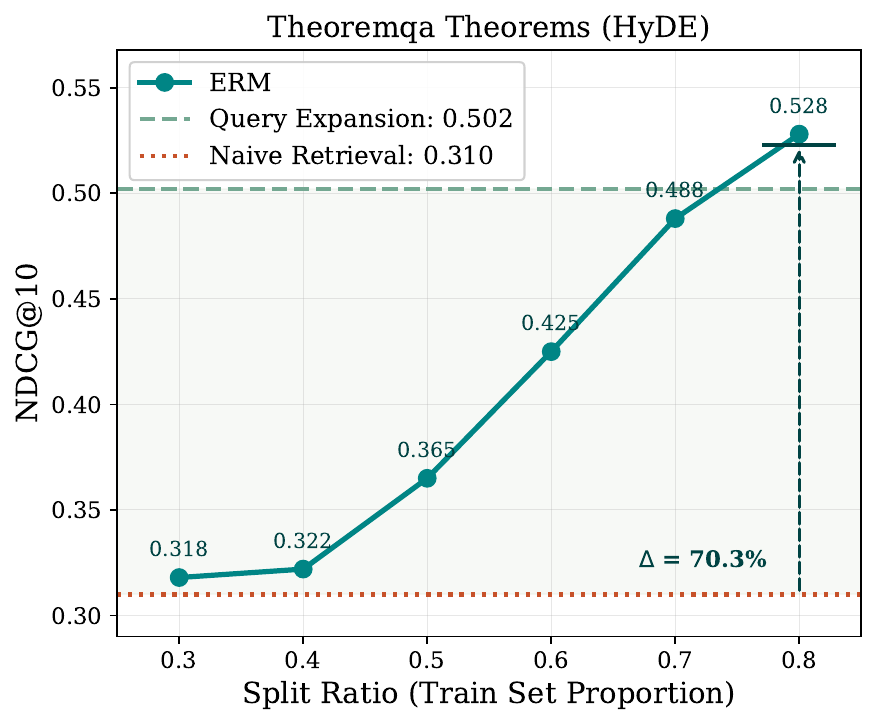}
    \caption{TheoT}
    \label{fig:dyrag:c}
  \end{subfigure}
  \hfill
  \begin{subfigure}[t]{0.24\textwidth}
    \centering
    \includegraphics[width=\linewidth]{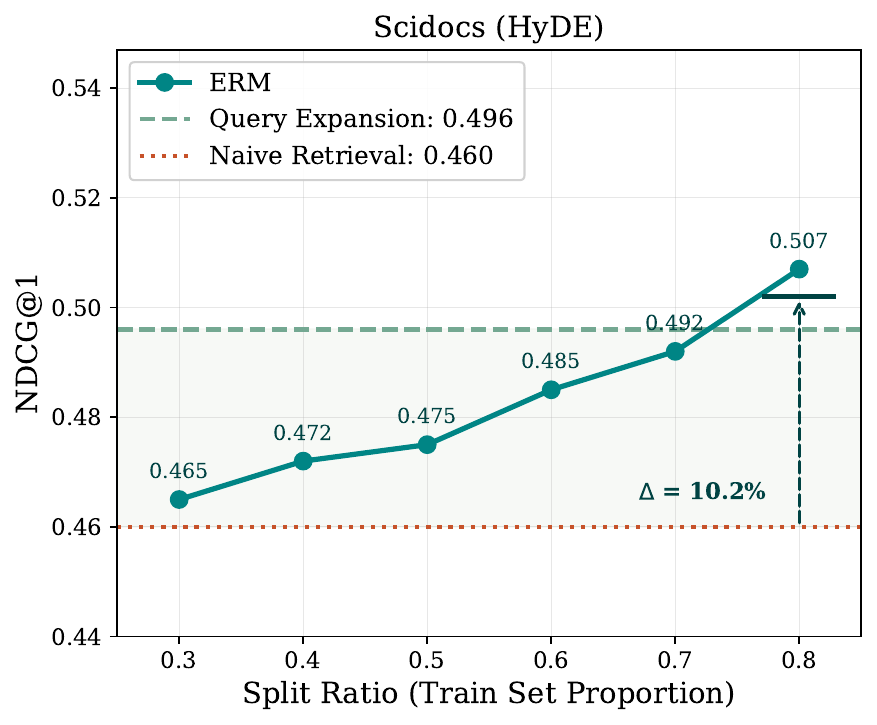}
    \caption{SCI}
    \label{fig:dyrag:d}
  \end{subfigure}
    \caption{\textbf{ERM performance as a function of adaptation budget.} nDCG@10 on held-out queries after evolving keys using an increasing fraction (0.3–0.8) of disjoint adaptation queries, with keys reset for each split. Results are shown for the GTE-base retriever with HyDE query expansion and title-based indexing. Performance improves monotonically as ERM is allowed to adapt using more past queries.}
  \label{fig:main:gain}
\end{figure*}

\subsection{Main Results}
\label{sec:results}

We evaluate ERM under a repeated holdout adaptation protocol, varying the fraction of queries used for key evolution from 0.3 to 0.8 (in increments of 0.1).
For each split, document keys are reset to their initial state, a randomly sampled subset of queries is used to perform adaptive key updates, and retrieval performance is measured on the remaining queries without further modification of the index.
We combine multiple query expansion strategies and random seeds to assess robustness across different query orderings and adaptation subsets. Reported results aggregate performance across all tested configurations. Comprehensive per-dataset results and ablation studies are provided in \hyperref[sec:appendix_exp]{Appendix}~\ref{sec:appendix_exp}.


\paragraph{Consistent gains across domains and retrievers.}
As shown in \hyperref[tab:main:results]{Table}~\ref{tab:main:results}, ERM consistently improves retrieval across all domains and retriever architectures, achieving +46\% average nDCG improvement for BM25 and +13--15\% for dense retrievers. Gains are particularly pronounced on reasoning-intensive datasets where surface-level lexical overlap is low: LeetCode (+19--44\%), Pony (+64--103\%), AoPS (+115--2200\%), and TheoremQA (+75--378\%). The largest relative gains occur for weaker retrievers (BM25: +46\%), while already-strong proprietary models (Voyage) still gain +11\%, indicating that ERM complements rather than duplicates the underlying retrieval capability.

\paragraph{Downstream generation quality.}
As shown in \hyperref[tab:qa_retriever_results]{Table}~\ref{tab:qa_retriever_results}, ERM improves question-answering performance across all retrievers, with BM25 showing +6\% average gain and dense retrievers achieving +2--4\%. Even when retrieval metrics show modest gains (e.g., Biology), generation quality improves because generator feedback directly informs key updates, creating a tighter retrieval-generation feedback loop.

\paragraph{Inference time efficiency.}
As shown in \hyperref[fig:main:compare]{Figure}~\ref{fig:main:compare}, ERM operates at native retrieval speed (150--280\,ms per query), yielding a 50--100$\times$ latency reduction compared to HyDE (7--15 seconds) while delivering comparable or superior retrieval quality. \hyperref[fig:main:gain]{Figure}~\ref{fig:main:gain} further shows that ERM's nDCG@10 improves monotonically as the adaptation budget increases from 0.3 to 0.8, confirming that key evolution accumulates useful signal rather than overfitting.

\subsection{Analysis of ERM}
\label{sec:ablation}

\paragraph{Robustness across query expansion methods.}
As shown in \hyperref[fig:ab:bubble]{Figure}~\ref{fig:ab:bubble}, ERM improves for retrieval across all combinations of query expansion methods and retrievers on LeetCode, with gains ranging from +12\% (Facet with BM25) to +58\% (HyDE with BGE-Large). Dense retrievers show the highest gains (+40--58\%), confirming that ERM and query expansion are complementary rather than redundant.

\paragraph{Effect of index methods.}
Title-based indexing combined with ERM produces the best results (in \hyperref[sec:app:index]{Appendix}~\ref{sec:app:index}), while full-document embedding yields the weakest baseline. This supports a two-stage approach: titles provide concise, low-noise initialization, while ERM progressively enriches keys based on downstream feedback.

\paragraph{Robustness on label-disjoint queries.}
BRIGHT StackExchange datasets provide a stringent stress test: five datasets have zero gold-document overlap across queries, meaning key evolution from one query cannot directly benefit another's gold documents. Despite this, ERM yields consistent BM25 gains (+6--47\%), while dense retriever performance remains within $\pm$3\% of baseline. This confirms that key updates benefit the triggering query without degrading retrieval quality for unrelated queries.

\subsection{Discussion}
\label{sec:case}

ERM's improvements stem from two complementary mechanisms: (1) \emph{Key enrichment for frequently queried documents}: When a document is successfully retrieved and used in generation, ERM updates its key representation to better align with the query patterns that led to successful retrieval. 
(2) \emph{Disambiguation through specialization}: As document keys become more specialized toward their successful query patterns, the semantic space becomes less crowded, reducing retrieval confusion between similar but distinct documents. This is particularly beneficial for domains with high lexical overlap (e.g., theorem retrieval) where naive embeddings struggle to distinguish between related concepts. A potential drawback of ERM is that its positive feedback loop may reinforce early retrieval biases, though this can be mitigated by increasing the adaptation batch size and using more patient stopping criteria to promote broader exploration. 


\section{Conclusion}
\label{sec:con}
We presented Evolving Retrieval Memory (ERM), a training-free framework that bridges query expansion and key expansion by selectively persisting generator-validated evidence into retrieval keys. Through correctness-gated feedback, selective expansion attribution, and progressive key evolution, ERM accumulates validated semantic signals directly into the document index without retraining any model parameters. We further established formal guarantees that ERM updates never decrease retrieval alignment and that evolving keys converge to a stable index. Experiments across diverse retrieval benchmarks, retrievers, and query expansion methods confirm that ERM consistently improves both retrieval quality and downstream generation accuracy while operating at native retrieval latency.

\nocite{langley00}

\bibliography{example_paper}
\bibliographystyle{icml2026}

\newpage
\appendix
\onecolumn

\section{Proofs of Theoretical Guarantees}
\label{sec:theory}

This section provides proofs for the theoretical results stated in \hyperref[method:evo]{Section~\ref{method:evo}}.
All guarantees are local (per document), non-parametric, and do not rely
on optimizing a global retrieval objective.

\subsection{Proof of Proposition~\ref{prop:consistency} (Cumulative Consistency)}
\label{sec:consistency}

\begin{proof}
Define the per-query attribution increment
\[
X_t^{(i,j)}
:=
\mathbf{1}_{i,j}(q_t)\,
w_{i,j}(q_t)\,
\Delta_{i,j}(q_t).
\]
Since queries are i.i.d.\ and keys are fixed,
$\{X_t^{(i,j)}\}_{t \ge 1}$ is an i.i.d.\ sequence of nonnegative random
variables.

\emph{Boundedness.}
Because $w_{i,j}(q) \in [0,1]$ and similarity gains are bounded for
bounded embeddings, $X_t^{(i,j)}$ has finite expectation.

\emph{Consistency.}
By the strong law of large numbers,
\[
\frac{s_{i,j}^{(T)}}{T}
=
\frac{1}{T}\sum_{t=1}^T X_t^{(i,j)}
\;\xrightarrow{\;\mathrm{a.s.}\;}
\mathbb{E}[X_1^{(i,j)}]
=
\mu_{i,j}.
\]

\emph{Top-$x$ selection.}
Under the strict gap assumption, eventual ordering of
$s_{i,j}^{(T)}$ matches that of $\mu_{i,j}$ almost surely.
Since the number of expansion units is finite, the top-$x$ selection
stabilises.
\end{proof}

\begin{remark}[Snapshot approximation]
\label{rem:snapshot}
Proposition~\ref{prop:consistency} assumes fixed keys to ensure i.i.d.\
attribution increments.
In ERM, keys evolve only after batched evolution steps.
Between such steps, keys remain unchanged, and the proposition applies
to queries processed within each interval.
As ERM approaches saturation, evolution steps become increasingly
infrequent, improving the accuracy of this snapshot approximation.
\end{remark}

\subsection{Proof of Theorem~\ref{thm:stability} (Stability)}
\label{sec:stability}

\begin{proof}
\textbf{Part~(i).}\;
A finite query stream yields finitely many batches, hence finitely
many evolution steps.
At each step $\|k_i^{(t+1)} - k_i^{(t)}\| \le xM$, so the total
displacement is finite and $\{k_i^{(t)}\}$ is Cauchy in a complete space.
The limit equals $k_i^{(0)} \oplus f(\tau_x^*)$ since the final top-$x$
selection determines the augmentation.

\textbf{Part~(ii).}\;
Finite total displacement implies the sequence is Cauchy and therefore
converges.
\end{proof}

\subsection{Proof of Corollary~\ref{cor:consistency} (Expected Consistency)}
\label{sec:expected-consistency}

\begin{proof}
By Proposition~\ref{prop:consistency}, the top-$x$ operator converges
a.s.\ to the indices with the largest $\mu_{i,j}$.
Under additive augmentation and inner-product similarity, the gated
similarity gain from a set $S$ of expansion units for a future query $q$
is
\[
\mathrm{sim}\!\big(f(q),\; k_i + {\textstyle\sum_{j \in S}} f(e_j)\big)
- \mathrm{sim}\!\big(f(q),\; k_i\big)
\;=\;
\sum_{j \in S} f(q)^\top f(e_j),
\]
where the equality uses bilinearity of the inner product.
Taking the expectation over $q \sim \mathcal{Q}$ weighted by the
attribution indicators and softmax routing gives
$\sum_{j \in S} \mu_{i,j}$.
This sum is maximised by selecting the $x$ indices with the largest
$\mu_{i,j}$, which is exactly $S^*$.
\end{proof}

\begin{remark}[Scope of applicability]
\label{rem:scope}
Corollary~\ref{cor:consistency} relies on additive similarity structure.
It applies exactly to unnormalised dense retrievers and approximately to
cosine-similarity models when key norms vary slowly.
It does not extend to sparse or late-interaction retrievers, where the
result is interpreted as a consistency property of ERM's
attribution mechanism rather than a global optimality guarantee.
\end{remark}

\subsection{Proof of Proposition~\ref{prop:cost} (Amortized Cost)}
\label{sec:zipf-cost}

\begin{proof}
Let $D_T$ denote the number of distinct intents observed in $T$ queries,
and let $p_r = r^{-\alpha}/H_{m,\alpha}$ be the probability of intent $r$,
where $H_{m,\alpha} = \sum_{r=1}^m r^{-\alpha}$.
Then
$\mathbb{E}[D_T] = \sum_{r=1}^{m}\big[1 - (1-p_r)^T\big]$.
Split the sum at the threshold
$r^* = \lfloor (T/H_{m,\alpha})^{1/\alpha} \rfloor$:
\begin{itemize}[leftmargin=*]
\item \emph{Head intents} ($r \le r^*$):\;
$T p_r \ge 1$, so
$1 - (1-p_r)^T \ge 1 - e^{-1}$,
contributing $\Omega(r^*) = \Omega(T^{1/\alpha})$.
\item \emph{Tail intents} ($r > r^*$):\;
$T p_r < 1$, so
$1 - (1-p_r)^T \le T p_r$.
The tail sum is
$\sum_{r > r^*} T p_r
= \Theta\!\big(T \cdot (r^*)^{1-\alpha}\big)
= \Theta(T^{1/\alpha})$.
\end{itemize}
Both parts give $\Theta(T^{1/\alpha})$, confirming
$\mathbb{E}[D_T] = \Theta(T^{1/\alpha})$.
ERM performs expansion at most once per distinct intent, yielding total
cost $D_T \cdot C_\phi$ versus $T \cdot C_\phi$ for full QE\@.
The expected ratio is $\Theta(T^{1/\alpha-1}) \to 0$ as $T \to \infty$.
\end{proof}

\section{Experiments}
\label{sec:appendix_exp}

This section provides detailed descriptions of the retrievers, document indexing methods, and query expansion techniques evaluated in our experiments.

\subsection{Dataset}
\label{app:dataset}

We evaluate on 13 datasets from two benchmarks: BRIGHT~\cite{su2024bright} and BEIR~\cite{thakur2021beir}. BRIGHT provides both retrieval relevance judgments and generation ground truth (gold answers), enabling end-to-end RAG evaluation. BEIR provides only retrieval-level relevance labels without generation ground truth. Dataset statistics are summarized in \hyperref[tab:dataset_stats]{Table~\ref{tab:dataset_stats}}.

\paragraph{BRIGHT -- StackExchange Q\&A (7 datasets).}
These datasets are sourced from StackExchange forums spanning diverse domains: Biology, Earth Science, Economics, Psychology, Robotics, StackOverflow, and Sustainable Living. Each query is a reasoning-intensive question posted by a real user, and the corpus consists of all answers within the corresponding forum. Queries require multi-step reasoning to identify relevant documents, as surface-level lexical overlap between queries and gold documents is typically low. The number of queries ranges from 101 (Psychology, Robotics) to 118 (Earth Science), with corpus sizes from 50K to 122K documents.

\paragraph{BRIGHT -- Coding \& Math (4 datasets).}
LeetCode contains competitive programming problems paired with solution discussions; Pony consists of queries about the Pony programming language matched to documentation pages; AoPS (Art of Problem Solving) contains math competition problems with theorem-based solutions; and TheoremQA-T pairs mathematical questions with relevant theorems. These datasets are particularly challenging for retrieval, as queries and documents are expressed in different modalities (e.g., a problem statement vs.\ a theorem or code solution). Corpus sizes vary widely from 7.9K (Pony) to 414K (LeetCode).

\paragraph{BEIR (2 datasets).}
NFCorpus is a medical information retrieval dataset with 323 queries over 3.1K documents, where relevance is graded on a multi-level scale. SciDocs evaluates scientific document retrieval with 1,000 queries over 4K documents. Unlike BRIGHT, these datasets provide only retrieval relevance labels and do not include generation-level ground truth, so we evaluate them on retrieval metrics only.

\subsection{Retrievers}
\label{sec:appendix_retrievers}

We evaluate nine retrieval models spanning different architectures, including open-source encoders and commercial embedding APIs:

\paragraph{Open-Source Models.}
\begin{itemize}[leftmargin=*,itemsep=2pt,topsep=2pt]
    \item \textbf{GTE-base} \citep{li2023towards}: General Text Embeddings model from Alibaba DAMO Academy. A 110M parameter encoder trained on large-scale text pairs with contrastive learning. Produces 768-dimensional embeddings optimized for semantic similarity.

    \item \textbf{BGE-base-en-v1.5} \citep{xiao2024c}: BAAI General Embedding model (base version). A 110M parameter BERT-based encoder with instruction-tuned embeddings. Known for strong performance on MTEB benchmark.

    \item \textbf{BGE-large-en-v1.5} \citep{xiao2024c}: Large variant of BGE with 335M parameters and 1024-dimensional embeddings. Offers improved representation capacity at the cost of higher computational requirements.

    \item \textbf{BGE-M3} \citep{chen-etal-2024-m3}: Multi-Functionality, Multi-Linguality, and Multi-Granularity embedding model. Supports dense, sparse, and multi-vector retrieval in a single model. Particularly effective for cross-lingual and long-document retrieval.

    \item \textbf{MiniLM} (all-MiniLM-L6-v2): Distilled 22M parameter model from Microsoft. Optimized for speed with 384-dimensional embeddings. Provides a good balance between efficiency and quality for resource-constrained settings.

    \item \textbf{DistilBERT} (distilbert-base-uncased): A 66M parameter distilled version of BERT. While not specifically trained for retrieval, serves as a baseline for general-purpose language model embeddings.

    \item \textbf{BM25} (sparse): Classic lexical retrieval using Okapi BM25 scoring. Serves as a non-neural baseline that relies on exact term matching and TF-IDF weighting.
\end{itemize}

\paragraph{Commercial Embedding APIs.}
\begin{itemize}[leftmargin=*,itemsep=2pt,topsep=2pt]
    \item \textbf{Cohere Embed v3} (embed-english-v3.0): Cohere's state-of-the-art embedding model producing 1024-dimensional vectors. Trained on diverse web data with compression-aware training for efficient retrieval. Supports input types (search\_document, search\_query) for asymmetric retrieval.

    \item \textbf{Voyage-2} (voyage-2): Voyage AI's general-purpose embedding model optimized for retrieval tasks. Produces 1024-dimensional embeddings with strong performance on domain-specific benchmarks. Offers good balance between quality and API cost.
\end{itemize}

\subsection{Document Indexing Methods}
\label{sec:appendix_index}

We evaluate four document indexing strategies that determine which portion of each document is embedded for retrieval:

\begin{itemize}[leftmargin=*,itemsep=2pt,topsep=2pt]
    \item \textbf{Full Document (none)}: Embeds the entire document content. Captures comprehensive information but may introduce noise for long documents and exceed model context limits.

    \item \textbf{Title}: Embeds only the document title or heading. Provides concise representation that often captures the main topic. Particularly effective for Q\&A forums where titles are written as questions.

    \item \textbf{Abstract}: Embeds the document abstract or summary. Balances comprehensiveness with conciseness. Most applicable to scientific papers and structured documents.

    \item \textbf{Keywords}: Embeds extracted keywords or key phrases. Focuses on the most salient terms, reducing noise from boilerplate content. Keywords are extracted using TF-IDF weighting and domain-specific heuristics.
\end{itemize}

\subsection{Query Expansion Methods}
\label{sec:appendix_qe_methods}

We evaluate nine query expansion techniques that augment the original query before retrieval. These methods fall into three categories: (1) LLM-based generation methods that use language models to transform or expand queries, (2) neural rewriting methods that use specialized encoder models, and (3) classical IR methods based on lexical analysis.

\begin{table}[t]
  \centering
  \caption{Dataset statistics. \#Q: number of queries; \#D: corpus size; \#D+: average number of positive (relevant) documents per query; Q.L./D.L.: average query/document length (GPT-2 tokens). Gen.~GT indicates whether generation ground truth is available.}
  \label{tab:dataset_stats}
  \small
  \setlength{\tabcolsep}{4pt}
  \renewcommand{\arraystretch}{1.1}
  \begin{tabular}{llrrrrrc}
    \toprule
    \textbf{Benchmark} & \textbf{Dataset} & \textbf{\#Q} & \textbf{\#D} & \textbf{\#D+} & \textbf{Q.L.} & \textbf{D.L.} & \textbf{Gen.~GT} \\
    \midrule
    \multirow{7}{*}{{BRIGHT-SE}}
      & Biology          & 103 & 57,364  & 3.6  &  83.6 & 115.2 & \cmark \\
      & Earth Science    & 118 & 122,388 & 7.7  & 132.4 & 113.3 & \cmark \\
      & Economics        & 103 & 50,221  & 8.0  & 120.2 & 181.5 & \cmark \\
      & Psychology       & 101 & 52,841  & 7.3  & 118.2 & 149.6 & \cmark \\
      & Robotics         & 101 & 62,198  & 5.5  & 120.6 & 818.9 & \cmark \\
      & StackOverflow    & 117 & 101,100 & 7.0  & 704.5 & 478.3 & \cmark \\
      & Sust.~Living     & 108 & 60,732  & 5.6  & 108.0 & 148.5 & \cmark \\
    \midrule
    \multirow{4}{*}{{BRIGHT-CM}}
      & LeetCode         & 142 & 413,932 & 1.8  & 483.1 & 497.5 & \cmark \\
      & Pony             & 112 & 7,894   & 22.5 &  98.3 & 102.6 & \cmark \\
      & AoPS             & 111 & 188,177 & 4.7  &  89.0 & 250.5 & \cmark \\
      & TheoremQA-T      & 206 & 188,177 & 3.2  & 117.1 & 250.5 & \cmark \\
    \midrule
    \multirow{2}{*}{BEIR}
      & NFCorpus         & 323 & 3,129   & 38.2 &   --- &   --- & \xmark \\
      & SciDocs          & 1,000 & 4,021 & 4.9  &   --- &   --- & \xmark \\
    \bottomrule
  \end{tabular}
\end{table}

\paragraph{LLM-Based Methods.}
\begin{itemize}[leftmargin=*,itemsep=2pt,topsep=2pt]
    \item \textbf{HyDE} (Hypothetical Document Embeddings) \citep{gao-etal-2023-precise}: Generates a hypothetical answer to the query using an LLM, then uses that answer as the search query. Bridges the lexical gap between questions and answers by transforming the query into the document space.
\begin{tcolorbox}[colback=gray!5,colframe=gray!50,title=HyDE: Hypothetical Document Generation,fontupper=\small\ttfamily]
<task>\\
You are an expert knowledge synthesizer implementing HyDE (Hypothetical Document Expansion). Your task is to generate a comprehensive, detailed hypothetical document that would perfectly answer the given question.\\
</task>\\
\\
<query>\\
\{query\}\\
</query>\\
\\
<instructions>\\
First, analyze the question thoroughly in a <think> section:\\
\\
1. QUESTION ANALYSIS - What is the core question being asked?\\
2. DOMAIN IDENTIFICATION - What field does this belong to?\\
3. ANSWER STRUCTURE PLANNING - How should a comprehensive answer be organized?\\
4. TERMINOLOGY MAPPING - What technical terms are essential?\\
5. SOURCE PREDICTION - What types of documents would contain this information?\\
\\
After your detailed reasoning, provide a comprehensive hypothetical document that directly and thoroughly answers the question with extensive background and context.\\
</instructions>
\end{tcolorbox}

    \item \textbf{Diver} \citep{long2025diver}: Diversified query expansion that generates multiple alternative query formulations to capture different aspects of the information need. Aggregates results from diverse query variants to improve recall.
\begin{tcolorbox}[colback=gray!5,colframe=gray!50,title=DIVER: Iterative Query Refinement (Round 1),fontupper=\small\ttfamily]
Given a query and the provided passages (some may be irrelevant), identify helpful information and write an improved query that incorporates relevant details.\\
\\
Query:\\
\{query\}\\
\\
Relevant passages:\\
\{passages\}\\
\\
Write an improved query that incorporates key information from the passages:
\end{tcolorbox}

    \item \textbf{Decomposition}: Breaks complex queries into simpler sub-queries using chain-of-thought reasoning. Each sub-query retrieves independently, and results are merged. Particularly effective for multi-hop reasoning questions.
\begin{tcolorbox}[colback=gray!5,colframe=gray!50,title=Decomposition: Query Breakdown,fontupper=\small\ttfamily]
<task>\\
You are an expert at analyzing complex questions and breaking them down into comprehensive, well-researched sub-queries.\\
</task>\\
\\
<query>\\
\{query\}\\
</query>\\
\\
<instructions>\\
First, analyze the question thoroughly in a <think> section:\\
\\
1. COMPLEXITY ANALYSIS - What makes this question complex?\\
2. CONCEPT MAPPING - What are the main concepts involved?\\
3. INFORMATION NEED BREAKDOWN - What distinct pieces of information are needed?\\
4. SUB-QUERY STRATEGY - How should the query be split to maximize coverage?\\
5. RETRIEVAL OPTIMIZATION - How can each sub-query be phrased for optimal matching?\\
\\
After your detailed reasoning, provide at least 5 comprehensive sub-queries that are detailed, self-contained, and cover different aspects of the original question.\\
</instructions>
\end{tcolorbox}

    \item \textbf{Facet}: Generates multiple facets or aspects of the query topic using an LLM. Each facet retrieves different relevant content, improving recall for broad or ambiguous topics.
\begin{tcolorbox}[colback=gray!5,colframe=gray!50,title=Facet: Multi-Aspect Query Generation,fontupper=\small\ttfamily]
<task>\\
You are an expert at semantic facet analysis for information retrieval. Your task is to identify multiple semantic dimensions of a query and generate comprehensive keyword expansions for each facet.\\
</task>\\
\\
<query>\\
\{query\}\\
</query>\\
\\
<instructions>\\
First, analyze the query thoroughly in a <think> section:\\
\\
1. SEMANTIC STRUCTURE ANALYSIS - What are the distinct semantic dimensions?\\
2. FACET IDENTIFICATION - What are the main conceptual facets?\\
3. TERMINOLOGY MAPPING PER FACET - For each facet, what are the key terms?\\
4. RETRIEVAL STRATEGY - How can facet-specific terms improve retrieval diversity?\\
5. COVERAGE OPTIMIZATION - Are there gaps in facet coverage?\\
\\
After your detailed reasoning, provide at least 6 clearly defined semantic facets with 10-15 relevant keywords each.\\
</instructions>
\end{tcolorbox}

    \item \textbf{Contrastive}: Generates contrastive descriptions that highlight what the user is NOT looking for. Uses negative examples to help distinguish between similar but different concepts during retrieval.
\begin{tcolorbox}[colback=gray!5,colframe=gray!50,title=Contrastive: Positive/Negative Term Generation,fontupper=\small\ttfamily]
<task>\\
You are an expert at contrastive query expansion for precision-focused information retrieval. Your task is to generate comprehensive positive and negative expansion terms.\\
</task>\\
\\
<query>\\
\{query\}\\
</query>\\
\\
<instructions>\\
First, analyze the query thoroughly in a <think> section:\\
\\
1. INTENT ANALYSIS - What would constitute a relevant/irrelevant document?\\
2. POSITIVE TERM STRATEGY - What terms MUST appear in relevant documents?\\
3. NEGATIVE TERM STRATEGY - What terms indicate a document is OFF-topic?\\
4. DISAMBIGUATION ANALYSIS - Are there multiple interpretations?\\
5. PRECISION OPTIMIZATION - How can we maximize precision without sacrificing recall?\\
\\
After your detailed reasoning, provide:\\
POSITIVE EXPANSION (30+ terms): Core concepts, technical vocabulary, related processes\\
NEGATIVE EXPANSION (20+ terms): Off-topic terms, wrong domain indicators, misleading homonyms\\
</instructions>
\end{tcolorbox}

    \item \textbf{Custom-Rewriter}: LLM-based query rewriter that comprehensively rewrites and expands queries with detailed reasoning to maximize retrieval effectiveness.
\begin{tcolorbox}[colback=gray!5,colframe=gray!50,title=Custom-Rewriter: Comprehensive Query Rewriting,fontupper=\small\ttfamily]
<task>\\
You are an expert query rewriter for information retrieval systems. Your task is to comprehensively rewrite and expand the following query to maximize retrieval effectiveness.\\
</task>\\
\\
<query>\\
\{query\}\\
</query>\\
\\
<instructions>\\
First, analyze the query thoroughly in a <think> section:\\
\\
1. UNDERSTANDING THE QUERY - What is the user really asking for?\\
2. DOMAIN ANALYSIS - What field/domain does this query belong to?\\
3. DOCUMENT PREDICTION - What types of documents would answer this query?\\
4. EXPANSION STRATEGY - What related concepts should be included?\\
5. REWRITING APPROACH - How can the query be made more precise?\\
\\
After your detailed reasoning, provide a comprehensive rewritten query that includes extensive background context and uses precise technical terminology.\\
</instructions>
\end{tcolorbox}

\end{itemize}

\paragraph{Neural Rewriting Methods.}
\begin{itemize}[leftmargin=*,itemsep=2pt,topsep=2pt]
    \item \textbf{BGE-Rewriter}: Uses the BGE instruction-following capability to rewrite queries in a retrieval-optimized format. Transforms natural language questions into search-friendly statements without requiring LLM inference.

    \item \textbf{Turbo-Rewriter}: Fast query rewriting using GPT-3.5-Turbo with minimal prompting. Balances quality and latency for production settings where full LLM generation is too slow.
\end{itemize}

\paragraph{Classical IR Methods.}
\begin{itemize}[leftmargin=*,itemsep=2pt,topsep=2pt]
    \item \textbf{Synonyms}: Expands queries with WordNet synonyms and related terms. A classical IR technique that addresses vocabulary mismatch without neural generation.

    \item \textbf{PRF} (Pseudo-Relevance Feedback): Uses terms from top-k initially retrieved documents to expand the query. Assumes top results are relevant and extracts discriminative terms.
\end{itemize}


\subsection{Naive Retrieval Analysis}
\label{sec:appendix_naive}

This section provides detailed analysis of naive retrieval experiments, which evaluate different dense retrievers and document indexing strategies without query expansion or dynamic key evolution. These results establish the foundation for comparing our ERM approach. For detailed descriptions of each retriever and index method, see \hyperref[sec:appendix_retrievers]{Section~\ref{sec:appendix_retrievers}} and \hyperref[sec:appendix_index]{Section~\ref{sec:appendix_index}}.

\subsection{Experimental Setup}

\begin{table}[t]
\centering
\caption{Naive retrieval: best retriever and index per dataset (393 experiments, no query expansion).}
\label{tab:pipeline_baseline_appendix}
\small
\setlength{\tabcolsep}{3pt}
\renewcommand{\arraystretch}{1.1}
\begin{tabular}{llccccccc}
\toprule
\textbf{Dataset} & \textbf{Retriever} & \textbf{Index} & \textbf{NDCG@1} & \textbf{NDCG@5} & \textbf{NDCG@10} & \textbf{Recall@5} & \textbf{Recall@10} & \textbf{MRR} \\
\midrule
\multicolumn{9}{c}{\textit{BEIR Datasets}} \\
\midrule
NFCorpus & BGE-base & abstract & 0.288 & 0.155 & 0.125 & 0.026 & 0.031 & 0.351 \\
SciDocs & GTE-base & none & 0.466 & 0.251 & 0.191 & 0.110 & 0.148 & 0.532 \\
\midrule
\multicolumn{9}{c}{\textit{BRIGHT - StackExchange Q\&A}} \\
\midrule
Biology & GTE-base & title & 0.971 & 0.598 & 0.597 & 0.491 & 0.500 & 0.979 \\
Earth Sci. & GTE-base & title & 0.767 & 0.510 & 0.516 & 0.439 & 0.465 & 0.826 \\
Economics & GTE-base & title & 0.641 & 0.431 & 0.440 & 0.381 & 0.402 & 0.695 \\
Psychology & GTE-base & title & 0.535 & 0.351 & 0.382 & 0.304 & 0.351 & 0.585 \\
Robotics & BGE-M3 & title & 0.455 & 0.350 & 0.361 & 0.342 & 0.369 & 0.543 \\
StackOvflw & GTE-base & abstract & 0.581 & 0.399 & 0.411 & 0.365 & 0.393 & 0.645 \\
Sust. Living & GTE-base & title & 0.750 & 0.500 & 0.511 & 0.430 & 0.465 & 0.808 \\
\midrule
\multicolumn{9}{c}{\textit{BRIGHT - Coding \& Math}} \\
\midrule
LeetCode & BGE-large & keywords & 0.183 & 0.214 & 0.215 & 0.204 & 0.239 & 0.348 \\
Pony & DistilBERT & none & 0.893 & 0.559 & 0.450 & 0.202 & 0.286 & 0.925 \\
AoPS & GTE-base & keywords & 0.027 & 0.139 & 0.177 & 0.107 & 0.185 & 0.235 \\
TheoremQA-T & MiniLM & abstract & 0.342 & 0.272 & 0.324 & 0.225 & 0.336 & 0.462 \\
\bottomrule
\end{tabular}
\end{table}

We evaluated 7 dense retrievers across 13 datasets with 4 index methods, totaling 393 naive retrieval experiments. See \hyperref[sec:appendix_retrievers]{Section~\ref{sec:appendix_retrievers}} for retriever details and \hyperref[sec:appendix_index]{Section~\ref{sec:appendix_index}} for index method descriptions.

\subsection{Best Configuration per Dataset}

\hyperref[tab:pipeline_baseline_appendix]{Table~\ref{tab:pipeline_baseline_appendix}} presents the best retriever and index configuration for each dataset.

\paragraph{GTE-base as the dominant retriever.}
GTE-base achieves the best performance on 8 out of 13 datasets, spanning both StackExchange Q\&A and BEIR domains. This dominance likely stems from GTE's contrastive pre-training on large-scale text pairs, which produces embeddings well-suited for semantic matching across diverse domains. The remaining datasets where GTE does not win, such as Robotics (BGE-M3), NFCorpus (BGE-base), LeetCode (BGE-large), Pony (DistilBERT), and TheoremQA-T (MiniLM), tend to involve either highly specialized vocabularies or document structures that deviate from standard prose, suggesting that domain-specific embedding properties can still outweigh general-purpose representation quality.

\paragraph{Title indexing preferred for Q\&A domains.}
StackExchange datasets consistently achieve their best retrieval scores with title-based indexing, yielding 15--40\% absolute improvement in NDCG@10 over full-document indexing. This is because StackExchange titles are concise, information-dense summaries that closely mirror the query intent, while full document bodies introduce substantial noise from discussion threads, tangential comments, and code snippets. The effectiveness of title indexing for Q\&A content highlights a broader principle: when the query-document matching problem can be reduced to matching against a compact, high-signal representation, dense retrievers benefit from reduced embedding dilution.

\begin{figure}[t]
\centering
\includegraphics[width=0.95\textwidth]{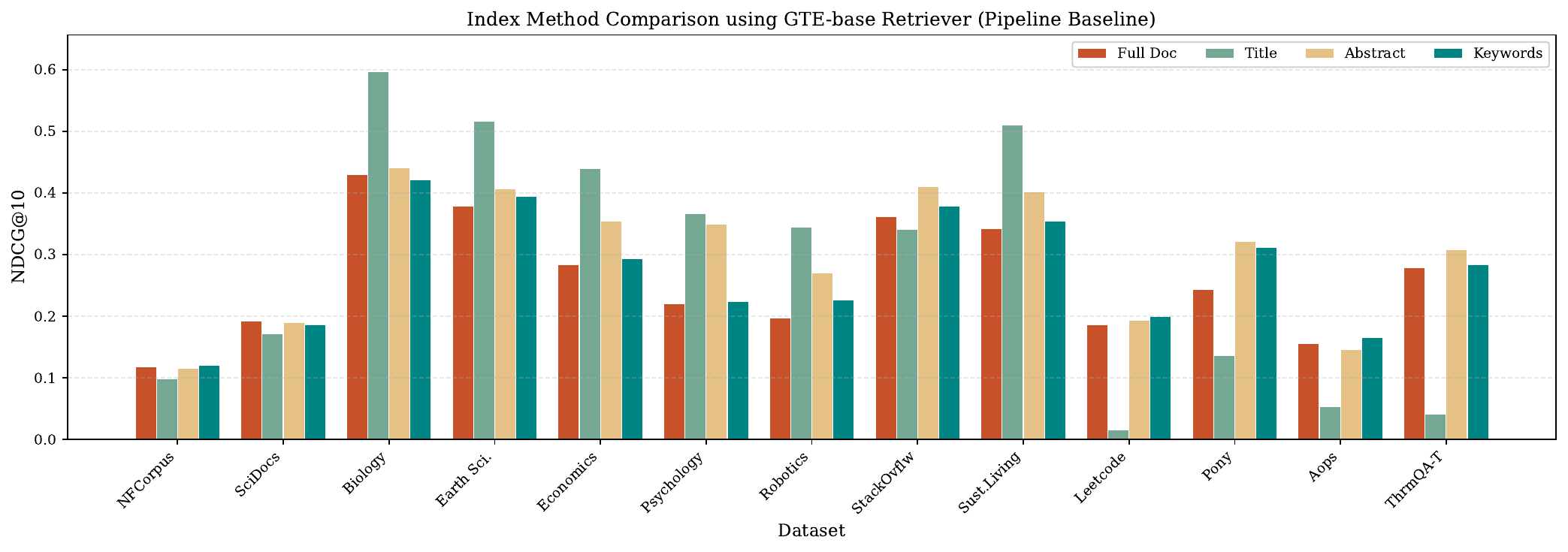}
\caption{Index method comparison using GTE-base retriever across all datasets. Title indexing dominates for StackExchange Q\&A domains, while abstract/keywords work better for technical and mathematical content. The consistent advantage of title indexing for Q\&A suggests that concise document representations reduce noise in dense retrieval.}
\label{fig:baseline_index}
\end{figure}

\paragraph{Domain-specific indexing patterns.}
Coding and math datasets exhibit different indexing preferences from Q\&A domains. LeetCode and AoPS perform best with keyword indexing, where extracted keywords capture the algorithmic concepts and mathematical terms essential for matching. In contrast, BEIR datasets show mixed preferences: NFCorpus favors abstract indexing (capturing medical summaries), while SciDocs works best with no indexing preprocessing. These patterns suggest that the optimal indexing strategy depends on the structural relationship between queries and documents: when queries and documents share explicit technical vocabulary, keyword extraction is effective, while topical but not lexical overlap calls for richer representations.

\subsection{Index Method Comparison}
\label{sec:app:index}

\hyperref[fig:baseline_index]{Figure~\ref{fig:baseline_index}} shows the impact of different document indexing strategies using the most-winning retriever (GTE-base) across all datasets, providing a controlled comparison that isolates the effect of document representation from retriever choice.

\paragraph{Title indexing excels for StackExchange.}
Biology, Earth Science, Economics, Psychology, and Sustainable Living all achieve their highest NDCG@10 with title-only indexing, often by significant margins of 0.1--0.2 absolute improvement. This consistent advantage arises because StackExchange question titles are carefully crafted by users to summarize their information need, creating a natural alignment between query semantics and title embeddings. When full document bodies are included, the embedding is diluted by answer text that may discuss tangential aspects or contain formatting artifacts, reducing retrieval precision.

\paragraph{Abstract indexing for technical content.}
StackOverflow and NFCorpus perform better with abstract indexing, suggesting that code-heavy or technical content benefits from richer contextual representations. For StackOverflow, where answers often contain code blocks alongside explanatory text, abstract-level indexing captures the semantic intent behind code solutions. Similarly, NFCorpus medical documents require sufficient context to disambiguate specialized terminology. This finding implies that the optimal document representation granularity depends on the semantic density of the content: sparse, focused documents benefit from compression (titles), while dense, multi-faceted documents benefit from moderate expansion (abstracts).

\paragraph{Keywords for mathematical domains.}
AoPS achieves its best results with keyword indexing, indicating that theorem-based retrieval benefits from explicit keyword extraction. Mathematical queries often reference specific theorem names, problem types, or algebraic structures that serve as strong retrieval signals when isolated from surrounding prose. This aligns with the observation that mathematical reasoning tasks require identifying precise conceptual anchors rather than broad topical similarity.

\paragraph{NDCG@1 $>$ NDCG@5 pattern.}
A counterintuitive pattern appears across most BRIGHT datasets where NDCG@1 exceeds NDCG@5. This occurs because BRIGHT datasets typically have a small number of relevant documents per query (avg.\ 3.6 for Biology), so the ideal DCG gain at cutoff 5 or 10 cannot be fully realized. High NDCG@1 indicates strong top-1 precision, while the lower NDCG@5/10 reflects the structural ceiling imposed by having fewer ground-truth relevant documents than the cutoff depth. This metric behavior is important to consider when interpreting retrieval improvements: gains in NDCG@10 are harder to achieve and thus more meaningful for these datasets.

\subsection{Query Expansion Analysis}
\label{sec:appendix_qe_analysis}

\noindent
\begin{minipage}[t]{0.42\textwidth}
  \vspace{0pt}
  \includegraphics[width=\linewidth]{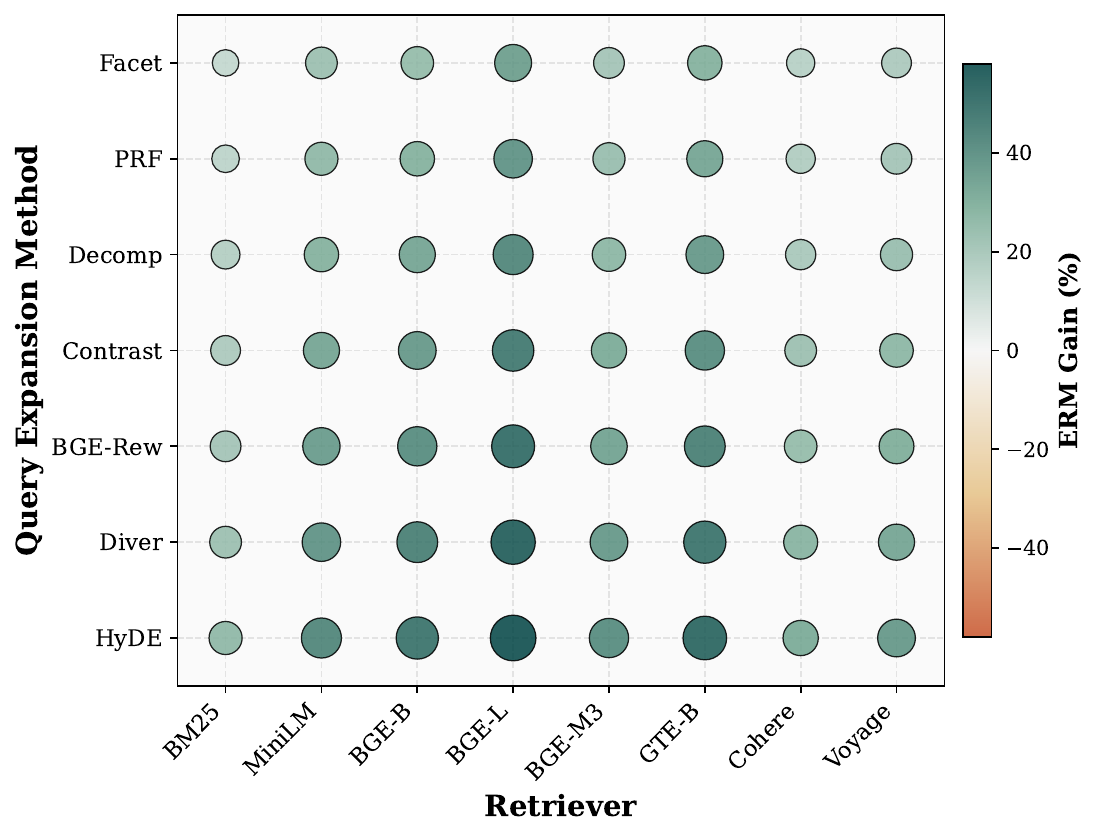}
  \captionof{figure}{\textbf{ERM gains across query expansion methods and retrievers on LeetCode.}}
  \label{fig:ab:bubble}
\end{minipage}%
\hfill
\begin{minipage}[t]{0.54\textwidth}
  \vspace{0pt}
\paragraph{Robustness to different query expansion methods.}
As shown in \hyperref[fig:ab:bubble]{Figure}~\ref{fig:ab:bubble}, ERM provides consistent improvements over naive retrieval across all combinations of query expansion methods and retrievers on the LeetCode dataset. The bubble sizes and colors indicate the percentage gain achieved by ERM, ranging from +12\% (Facet with BM25) to +58\% (HyDE with BGE-Large). Dense retrievers (BGE-Large, GTE-Base) show the highest gains (+40--58\%), while HyDE and Diver query expansion methods consistently outperform other QE approaches. 
\\

\hyperref[tab:qe_results]{Table~\ref{tab:qe_results}} presents the best query expansion configuration for each dataset. We compare against the naive retrieval baseline (no QE) to measure the impact of different QE methods. 
QE provides value even when the original queries are already well-formed natural language questions.
This indicates that the key evolutionary mechanisms of ERM and query expansion techniques are complementary, not redundant.
\end{minipage}


\paragraph{HyDE and Diver as complementary strategies.}
HyDE dominates on BEIR and technical content (NFCorpus, SciDocs, Psychology, StackOverflow, LeetCode), while Diver excels on StackExchange Q\&A datasets (Earth Science, Economics, Robotics, Sustainable Living). This complementarity arises from their different expansion mechanisms: HyDE generates a hypothetical relevant document and embeds it, which is particularly effective when the retrieval gap stems from modality differences between queries and documents (e.g., problem statements vs.\ code solutions). Diver, by contrast, generates diverse query reformulations that cover different aspects of the information need, which is beneficial for Q\&A domains where relevant documents may address the question from multiple angles.

\paragraph{Retriever-QE synergy.}
GTE remains the most effective retriever even with query expansion active, but the addition of QE enables BGE variants and MiniLM to become competitive on specific domains. Notably, MiniLM with HyDE achieves the best result on LeetCode despite being a smaller model, suggesting that QE can compensate for limited model capacity by transforming queries into a representation space where the retriever is more effective. This finding has practical implications: smaller, faster retrievers paired with QE may achieve comparable performance to larger retrievers without QE, offering a favorable compute-accuracy trade-off.

\paragraph{Minor regressions on well-structured content.}
Biology ($-0.7\%$), Sustainable Living ($-0.5\%$), and Pony ($-0.4\%$) show slight performance drops with QE. These datasets share a common property: their queries and documents already exhibit strong lexical and semantic alignment, so the additional expansion introduces noise rather than useful signal. For Biology and Sustainable Living, the title-indexed documents already provide near-perfect top-1 precision (NDCG@1 of 0.971 and 0.750 respectively), leaving little room for improvement and making even small noise injections harmful. This motivates ERM's selective evolution approach, which conditions updates on verified retrieval improvements rather than applying expansion uniformly.

\begin{table}[h]
\centering
\caption{Query expansion results: best QE configuration per dataset (ERM=False). $\Delta$(\%) shows percentage improvement in NDCG@1 vs naive retrieval baseline.}
\label{tab:qe_results}
\small
\setlength{\tabcolsep}{2pt}
\renewcommand{\arraystretch}{1.1}
\begin{tabular}{llllccccc|c}
\toprule
\textbf{Dataset} & \textbf{QE Method} & \textbf{Retriever} & \textbf{Index} & \textbf{NDCG@1} & \textbf{NDCG@5} & \textbf{NDCG@10} & \textbf{Recall@10} & \textbf{MRR} & \textbf{$\Delta_{\text{NDCG@1}}$(\%)} \\
\midrule
\multicolumn{10}{c}{\textit{BEIR Datasets}} \\
\midrule
NFCorpus & hyde & BGE-B & abstract & 0.293 & 0.152 & 0.129 & 0.036 & 0.346 & +2.1\% \\
SciDocs & hyde & GTE & none & 0.522 & 0.267 & 0.202 & 0.154 & 0.566 & +12.2\% \\
\midrule
\multicolumn{10}{c}{\textit{BRIGHT - StackExchange Q\&A}} \\
\midrule
Biology & diver & GTE & title & 0.964 & 0.595 & 0.594 & 0.500 & 0.974 & -0.7\% \\
Earth Sci. & diver & GTE & title & 0.828 & 0.540 & 0.545 & 0.484 & 0.874 & +8.0\% \\
Economics & diver & GTE & title & 0.651 & 0.433 & 0.443 & 0.407 & 0.698 & +1.6\% \\
Psychology & hyde & GTE & title & 0.654 & 0.420 & 0.443 & 0.381 & 0.689 & +22.4\% \\
Robotics & diver & BGE-M & title & 0.469 & 0.358 & 0.363 & 0.361 & 0.551 & +3.1\% \\
StackOvflw & hyde & GTE & keywords & 0.617 & 0.406 & 0.420 & 0.391 & 0.666 & +6.2\% \\
Sust. Living & diver & BGE-L & title & 0.770 & 0.502 & 0.505 & 0.447 & 0.815 & +2.7\% \\
\midrule
\multicolumn{10}{c}{\textit{BRIGHT - Coding \& Math}} \\
\midrule
LeetCode & hyde & Mini & keywords & 0.263 & 0.252 & 0.255 & 0.287 & 0.407 & +43.7\% \\
Pony & custom\_rewriter & Dist & none & 0.889 & 0.559 & 0.448 & 0.287 & 0.923 & -0.4\% \\
AoPS & decomposition & BGE-L & keywords & 0.227 & 0.201 & 0.230 & 0.225 & 0.356 & +740.7\% \\
TheoremQA-T & bge\_rewriter & GTE & abstract & 0.672 & 0.491 & 0.529 & 0.474 & 0.767 & +96.5\% \\
\bottomrule
\end{tabular}
\end{table}

\begin{figure}[h]
\centering
\includegraphics[width=0.95\textwidth]{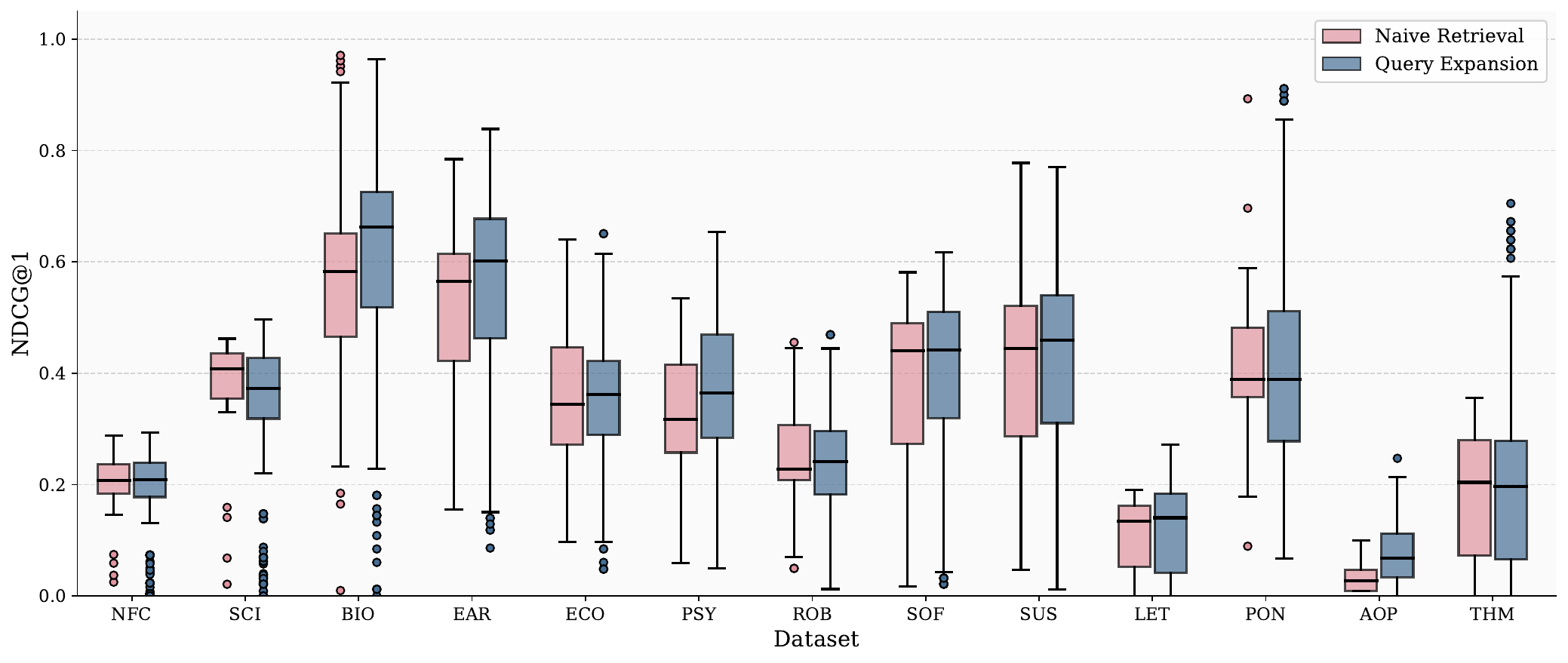}
\caption{Grouped box plot comparing NDCG@1 distributions between naive retrieval and query expansion across all datasets. Each box summarizes all retriever and index configurations tested. Query expansion generally improves median performance and raises the performance ceiling, particularly for complex reasoning datasets like AoPS (+148\% median improvement) and Biology (+14\% median improvement).}
\label{fig:boxplot_qe}
\end{figure}

\paragraph{Distribution Analysis.}
Beyond comparing only the best configurations, we analyze the full distribution of NDCG@1 scores across all retriever--index--QE combinations to understand how query expansion affects retrieval robustness more broadly. \hyperref[fig:boxplot_qe]{Figure~\ref{fig:boxplot_qe}} visualizes these distributions as grouped box plots, comparing naive retrieval against query expansion for each dataset. This distributional view complements the per-dataset best-configuration analysis in \hyperref[tab:qe_results]{Table~\ref{tab:qe_results}} by revealing not only whether the best case improves, but also how the typical case and worst case are affected.

\paragraph{Consistent improvement on reasoning-intensive datasets.}
Biology, Earth Science, Psychology, and AoPS show clear upward shifts in median NDCG@1 with query expansion. The most dramatic improvement appears on AoPS, where the median rises from 0.027 to 0.067, a 148\% relative gain. This disproportionate benefit for reasoning-intensive datasets is expected: when the semantic gap between queries and documents is large (as in mathematical problem-to-theorem matching), query expansion provides the intermediate representations needed to bridge that gap. The consistent direction of improvement across these datasets, despite using different optimal QE methods, confirms that the benefit of expansion is a general phenomenon rather than an artifact of a specific technique.

\paragraph{Reduced variance with query expansion.}
Several datasets (Biology, Earth Science, Psychology) exhibit tighter NDCG@1 distributions with QE, as evidenced by shorter interquartile ranges. This variance reduction indicates that QE makes retrieval performance more robust to the choice of retriever and index method. In practical terms, this means that practitioners can achieve more predictable performance with QE even without extensive hyperparameter tuning of the retrieval pipeline, which is an important property for deployment in new domains where the optimal retriever-index combination is unknown a priori.

\paragraph{Mixed results on BEIR datasets.}
NFCorpus and SciDocs show minimal distributional shift between naive and QE-augmented retrieval. These BEIR datasets were designed for standard information retrieval evaluation where queries and documents share substantial lexical overlap, unlike BRIGHT's reasoning-intensive tasks. The limited benefit of QE on BEIR suggests that query expansion is most valuable when there is a fundamental representation gap between queries and documents, rather than when the challenge is merely discriminating among lexically similar candidates.

\paragraph{Higher performance ceiling with QE.}
The upper whiskers for most datasets extend higher with query expansion, indicating that the best QE configurations outperform the best naive retrieval configurations. This is significant because it shows that QE does not merely raise the floor of performance but also pushes the ceiling, creating new opportunities for methods like ERM to build upon. The gap between the QE ceiling and the QE median also suggests room for adaptive selection of QE strategies, precisely the kind of optimization that ERM's key evolution mechanism can exploit.

\end{document}